\newcommand{\modM}{\mathscr{M}}
\newcommand{\modN}{\mathscr{N}}
\newcommand{\idem}{\boldsymbol{e}}
\newcommand{\bref}[1]{\textbf{\ref{#1}}}
\renewcommand{\geq}{\,{\geqslant}\,}
\renewcommand{\leq}{\,{\leqslant}\,}
\renewcommand{\le}{\,{\leqslant}\,}
\newcommand{\Ext}{\mathrm{Ext}_{\rule{0pt}{9.5pt}%
    \overline{\mathscr{LU}}_q}^1}
\newcommand{\ExtU}{\mathrm{Ext}_{\rule{0pt}{9.5pt}%
    \overline{\mathscr{U}}_q}^1}
\newcommand{\ExtUbul}{\mathrm{Ext}_{\rule{0pt}{9.5pt}%
    \overline{\mathscr{U}}_q}^{\bullet}}
\newcommand{\tensor}{\otimes}
\newcommand{\q}{\mathfrak{q}}
\newcommand{\catC}{\mathscr{C}}
\newcommand{\catCbar}{\overline{\mathscr{C}}}
\newcommand{\catCpl}{\mathscr{C}^{+}}
\newcommand{\catCmin}{\mathscr{C}^{-}}
\newcommand{\catSpl}{\mathscr{S}^{+}}
\newcommand{\catSmin}{\mathscr{S}^{-}}
\newcommand{\End}{\mathrm{End}}
\newcommand{\fusion}{%
  \mathop{{\otimes}\kern-7pt\raisebox{.6pt}{%
      \mbox{\footnotesize${\bullet}$}}}}
\newcommand{\UqSL}[1]{\mathscr{U}_{\q} s\ell(#1)}
\newcommand{\UresSL}[1]{\overline{\mathscr{U}}_{\q} s\ell(#1)}
\newcommand{\LUresSL}[1]{\mathscr{LU}_{\q} s\ell(#1)}
\newcommand{\ffrac}[2]{\mbox{\footnotesize$\displaystyle\frac{#1}{#2}$}}
\newcommand{\half}{%
  \mathchoice{\ffrac{1}{2}}{\frac{1}{2}}{\frac{1}{2}}{\frac{1}{2}}}
\newcommand{\LM}{{\mathcal{LM}}}
\newcommand{\WLM}{{\mathcal{WLM}}}
\newcommand{\repX}{\mathscr{X}}
\newcommand{\XX}{\mathscr{X}} 
\newcommand{\PP}{\mathscr{P}}
\newcommand{\voal}{\mathcal} 
\newcommand{\Vir}{\voal V_{p}}
\newcommand{\Virpq}{\voal V_{p,q}}
\newcommand{\Walg}{\voal W_{p}}
\newcommand{\modVir}{\mathcal{R}}
\newcommand{\modR}{\mathcal{R}}
\newcommand{\oC}{\mathbb{C}}
\newcommand{\oN}{\mathbb{N}}
\newcommand{\oZ}{\mathbb{Z}}
\newcommand{\one}{\boldsymbol{1}}
\newcommand{\toppr}{\mathsf{t}}
\newcommand{\botpr}{\mathsf{b}}
\newcommand{\leftpr}{\mathsf{l}}
\newcommand{\rightpr}{\mathsf{r}}
\newcommand{\stprp}{\mathsf{a}}
\newcommand{\cas}{\boldsymbol{C}}
\numberwithin{equation}{section}
\def\@secnumfont{\bfseries}
\def\subsubsection{\@startsection{subsubsection}{3}%
  \z@{.5\linespacing\@plus.7\linespacing}{-.5em}%
  {\normalfont\bfseries}}
\def\paragraph{\@startsection{paragraph}{4}%
  \z@\z@{-\fontdimen2\font}%
  \normalfont\bfseries}
\def\subparagraph{\@startsection{subparagraph}{5}%
  \z@\z@{-\fontdimen2\font}%
  \normalfont\bfseries}
\newcommand{\rme}{{\rm e}}
\newcommand{\algW}{\mathcal{W}}
\newcommand{\step}{\text{step}}
\newtheorem{Thm}[subsection]{Theorem}
\newtheorem{thm}[subsubsection]{Theorem}
\newtheorem{Lemma}[subsection]{Lemma}
\newtheorem{lemma}[subsubsection]{Lemma}
\newtheorem{prop}[subsubsection]{Proposition}
\theoremstyle{definition}
\newtheorem{rem}[subsubsection]{Remark}
\begin{document}
\title[Lusztig limit and fusion]{%
  \vspace*{-4\baselineskip}
  \mbox{}\hfill
  \\[\baselineskip] Lusztig limit of quantum sl(2) at root of unity
 and fusion of (1,p) Virasoro logarithmic minimal models}
\author{P.V.~Bushlanov, B.L.~Feigin, A.M.~Gainutdinov and I.Yu.~Tipunin}
\address{PVB:Moscow Institute of Physics and Technology, Dolgoprudny, Moskovskoe shosse 21a , Russia, 141700 }
\email{paulbush@mail.ru} 
\address{BLF:Higher School of Economics, Moscow, Russia and Landau institute for Theoretical Physics,
Chernogolovka, 142432, Russia}
\email{bfeigin@gmail.com}
\address{AMG:Tamm Theory Division, Lebedev Physics Institute, Leninski pr., 53,
Moscow, Russia, 119991}
\email{gainut@gmail.com} 
\address{IYuT:Tamm Theory Division, Lebedev Physics Institute, Leninski pr., 53,
Moscow, Russia, 119991}
\email{tipunin@gmail.com} 

\maketitle

\begin{abstract}
 We introduce a Kazhdan--Lusztig-dual quantum group for $(1,p)$
 Virasoro logarithmic minimal models as the Lusztig limit of the
 quantum $s\ell(2)$ at $p$\,th root of unity and show that this limit
 is a Hopf algebra. We calculate tensor products of irreducible and
 projective representations of the quantum group and show that these
 tensor products coincide with the fusion of irreducible and
 logarithmic modules in the $(1,p)$ Virasoro logarithmic minimal
 models.
\end{abstract}

\section{Introduction}
Logarithmic conformal field theories most naturally appear as a
scaling limit of two-dimensional nonlocal lattice models at a critical
point~\cite{[PRZ]} and in quantum chains with a nondiagonalizable
Hamiltonian~\cite{RS-Q}. Generally speaking, a conformal field
theory appearing at the limit depends on a way of taking the limit and
on chosen boundary conditions.  The recent
investigations~\cite{[PPR],[PR],[J.R]} argue that for proper choice of
boundary conditions the lattice models~\cite{[PRZ]} give in a scaling
limit logarithmic conformal models $\WLM(p,q)$ with the triplet
$\algW_{p,q}$-algebra of symmetry introduced in~\cite{[FGST3]} and
in~\cite{[K-first],[FHST]} for $q=1$. The chiral algebra in these
conformal models is an extension of the vacuum module of the Virasoro
algebra $\Virpq$ with the central charge $c_{p,q}=13-6p/q-6q/p$\; by the
triplet of the Virasoro primary fields with conformal 
dimension~$\Delta_{1,3}$.

The most investigated models are those with $q=1$.  In this case, the
conformal field theories $\WLM(1,p)$ corresponding to the lattice
models were described in terms of symplectic fermions
in~\cite{[Kausch]} (for $p=2$) and were studied in numerous
papers~\cite{[GK2],M.F,[GaberdielKausch3],[FHST],M.F1,[G-alg],
[Flohr],[GR1],[GR2],[GT],CF,FGK}.
For this set of models, the representation categories of the triplet
algebra $\algW_p$ and of the finite-dimensional quantum group
$\UresSL{2}$~\cite{[FGST]} at the $p$\,th root of unity are equivalent
as braided tensor categories~\cite{[FGST2]}. This is the manifestation
of the Kazhdan--Lusztig duality between vertex-operator algebras and
quantum groups in logarithmic models. This duality means that\, (i)
there is a one-to-one correspondence between representations; (ii) fusion
rules of a conformal model can be calculated by tensor products of a
quantum group representations and (iii) the modular group action 
generated from
chiral characters coincides with the one on the center of the
corresponding quantum group. In the logarithmic models $\WLM(1,p)$,
the Kazhdan--Lusztig duality is presented in its full strength (see
also review~\cite{Semikh-rew}). In
particular, the fusion calculated in~\cite{[FHST]} (see also~\cite{KF}) 
coincides with the Grothendieck ring of the $\UresSL{2}$.
 
For general coprime $p$ and $q$, the models $\WLM(p,q)$ also
demonstrate the Kazhdan--Lusztig duality with a quantum
group~\cite{[FGST3]} but relation between the quantum group and the
$\algW_{p,q}$ algebra is more subtle. There is \textit{no} one-to-one
correspondence between representations but the modular group action on
the center~\cite{[FGST4]} coincides with the one on chiral characters
in the $\algW_{p,q}$ theory and the quantum-group
fusion~\cite{[FGST3]} coincides with the fusion~\cite{JRas-W} of
$\algW_{p,q}$ representations under the identification of the fusion
generators $\mathcal{K}^+_{1,2}{\to}(1,2)_{\mathcal{W}}$ and
$\mathcal{K}^+_{2,1}\to (2,1)_{\mathcal{W}}$; we also identify
$\mathcal{K}^+_{a,b}\to (a,b)_{\mathcal{W}}$,
$\mathcal{K}^-_{a,b}{\to}(\Delta_{a,3q-b})_{\mathcal{W}}$~\footnote{
Indecomposable rank-2 and rank-3 representations appearing in the
fusion~\cite{JRas-W} are identified with the corresponding
quantum-group modules~\cite{[FGST4]} in the following way:
$\mathcal{P}^{+,+}_{a,b}\to
(\mathcal{R}^{p-a,0}_{p,b})_{\mathcal{W}}$,
$\mathcal{P}^{+,-}_{a,b}{\to}(\mathcal{R}^{0,q-b}_{a,q})_{\mathcal{W}}$
and $\mathsf{P}^{+}_{a,b}\to
(\mathcal{R}^{p-a,q-b}_{p,q})_{\mathcal{W}}$, $\mathsf{P}^{-}_{a,b}\to
(\mathcal{R}^{p-a,q-b}_{2p,q})_{\mathcal{W}}$.}. This fusion was also
derived~\cite{[Semi]} from the modular group properties of the chiral
characters.

Other choice of boundary conditions in the lattice models~\cite{[PRZ]}
leads to logarithmic conformal field models $\LM(p,q)$ with the
Virasoro symmetry $\Virpq$. Fusion rules for these models were
calculated in~\cite{RPfus} using a lattice approach and for some cases
in~\cite{EBF,[GaberdielKausch]} using the Nahm algorithm and 
in~\cite{RS-Q} using quantum-group symmetries in XXZ models at a root of unity.

In the present paper, we propose using the Kazhdan--Lusztig duality in
calculating the fusion rules for the simplest subset $\LM(1,p)$ of the
$\LM(p,q)$ models. 
We construct a quantum group dual to the Virasoro algebra $\Vir$ as an
extension of the quantum group $\UresSL2$ dual to $\algW_p$ from the
$\WLM(1,p)$ models. This quantum group is the Lusztig limit
$\LUresSL2$ of the usual quantum $s\ell(2)$
as~$\q\to e^{\imath\pi/p}$ and has the set
of irreducible representations $\XX^{\alpha}_{s,r}$, where
$s=1,2,\dots,p$\, and $\alpha=\pm$\, are $\UresSL2$ highest weight
parameters and $\frac{r-1}{2}$, $r\in\oN$, is the $s\ell(2)$ spin (see
precise definitions in Sec.~\bref{sec:irreps}).  The module
$\XX^{\alpha}_{s,r}$ is a tensor product of $s$-dimensional
irreducible $\UresSL2$- and $r$-dimensional irreducible
$s\ell(2)$-modules.  To each $\XX^{\alpha}_{s,r}$, a projective cover
$\PP^{\alpha}_{s,r}$ corresponds and
$\PP^{\alpha}_{p,r}=\XX^{\alpha}_{p,r}$.  The set of irreducible and
projective modules is closed under tensor products.

We show that the fusion \cite{RPfus} of irreducible and logarithmic
$\Vir$-representations coincides with tensor products of $\LUresSL2$
irreducible and projective modules.  To formulate the main result of
the paper, we introduce the following sum notations
\begin{gather*}
\mathop{\bigoplus{\kern-3pt}'}\limits_{r=a}^{b}f(r) = 
\bigoplus_{r=a}^{b}(1 - \ffrac{1}{2}\delta_{r,a} - \ffrac{1}{2}\delta_{r,b})f(r),\\
\mathop{\bigoplus{\kern-3pt}''}\limits_{r=a}^{b}f(r) =
\bigoplus_{r=a}^{b} \bigl(1 - \ffrac{3}{4}\delta_{r,a} -
\ffrac{1}{4}\delta_{r,a+2}(1+\delta_{a,-1}) - \ffrac{1}{4}\delta_{r,b-2} -
\ffrac{3}{4}\delta_{r,b}\bigr)f(r).
\end{gather*}

\begin{Thm}\label{thm-main}
 The tensor products between irreducible $\LUresSL2$-modules are
\begin{equation*}
\repX^{\alpha}_{s_1,r_1}\tensor\repX^{\beta}_{s_2,r_2} =
\bigoplus_{\substack{r=|r_1-r_2|+1\\step=2}}^{r_1+r_2-1}
\Bigr(\bigoplus_{\substack{s=|s_1-s_2|+1\\step=2}}^{\substack{
\min(s_1 + s_2 - 1,\\ 2p - s_1 - s_2 - 1)}}\!\!\!\repX^{\alpha \beta}_{s,r}
\;+\!\!\! \bigoplus_{\substack{s=2p - s_1 - s_2 +1\\step=2}}^{p-\gamma_2}
\!\!\!\!\!\!\PP^{\alpha \beta}_{s,r}\Bigl)
\end{equation*}
between the irreducible and projective modules are
\begin{equation*}
\repX^{\alpha}_{s_1,r_1}\tensor\PP^{\beta}_{s_2,r_2} = \!\!\!\!\!\!\!
\bigoplus_{\substack{r=|r_1-r_2|+1\\step=2}}^{r_1+r_2-1}\Bigl(
\bigoplus_{\substack{s=|s_1-s_2|+1\\step=2}}^{\substack{
\min(s_1 + s_2 - 1,\\ 2p - s_1 - s_2 - 1)}}\!\!\!\!\!\!\PP^{\alpha \beta}_{s,r}
+2\!\!\!\!\!\!\bigoplus_{\substack{s=2p-s_1-s_2+1\\step=2}}^{p-\gamma_2}
\!\!\!\!\!\!\PP^{\alpha\beta}_{s,r}\Bigr)
+2\mathop{\bigoplus{\kern-3pt}'}\limits_{\substack{r=|r_1-r_2|\\step=2}}^{r_1+r_2}
\bigoplus_{\substack{s=p-s_1+s_2+1\\step=2}}^{p-\gamma_1}
\!\!\!\!\!\!\PP^{-\alpha \beta}_{s,r},
\end{equation*}
and between the projective modules are
\begin{multline*}
\PP^{\alpha}_{s_1,r_1}\tensor \PP^{\beta}_{s_2,r_2}
= 2 \bigoplus_{\substack{r=|r_1-r_2|+1\\\text{step}=2}}^{r_1+r_2-1} 
\Bigl(\bigoplus_{\substack{s=|s_1-s_2|+1\\\text{step}=2}}^{\substack{
\min(s_1 + s_2 - 1,\\ 2p - s_1 - s_2 - 1)}}\PP^{\alpha\beta}_{s,r}
+2\bigoplus_{\substack{s=2p-s_1 - s_2 + 1\\\text{step}=2}}^{p-\gamma_2}
\PP^{\alpha\beta}_{s,r}\Bigr)\\
+2\mathop{\bigoplus{\kern-3pt}'}\limits_{\substack{r=|r_1-r_2|\\
\text{step}=2}}^{r_1+r_2}\Bigl(
\bigoplus_{\substack{s=|p-s_1-s_2|+1\\\text{step}=2}}^{\substack{
\min(p-s_1 + s_2 - 1,\\ p + s_1 - s_2 - 1)}}\!\!\!\!\!\!\PP^{-\alpha\beta}_{s,r}
+2\!\!\!\!\!\!\bigoplus_{\substack{s=\min(p-s_1 + s_2 + 1,\\ p + s_1 - s_2 + 1)}}^
{p-\gamma_1}\!\!\!\!\!\!\PP^{-\alpha\beta}_{s,r}\Bigl)
+\,4\!\!\!\mathop{\bigoplus{\kern-3pt}''}\limits_{\substack{r=|r_1-r_2|-1\\step=2}}^{r_1+r_2+1}
\bigoplus_{\substack{s=s_1 + s_2 + 1\\\text{step}=2}}^{p-\gamma_2}
\!\!\!\!\!\!\PP^{\alpha\beta}_{s,r},
\end{multline*}
where we set $\gamma_1=(s_1+s_2+1)\!\!\!\!\mod2, \;
\gamma_2=(s_1+s_2+p+1)\!\!\!\!\mod2$.
\end{Thm}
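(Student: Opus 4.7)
The strategy is to exploit the extension structure of $\LUresSL2$, which realizes each irreducible module $\XX^{\alpha}_{s,r}$ as the outer tensor product $\XX^{\alpha}_{s}\otimes V_r$ of the $\UresSL2$-irreducible $\XX^{\alpha}_{s}$ and the classical $s\ell(2)$-irreducible $V_r$. First I would spell out the precise module structure: the Chevalley generators of the $\UresSL2$-subalgebra act trivially on the $V_r$-factor, while the divided-power generators $E^{(p)}$, $F^{(p)}$ that generate the classical $s\ell(2)$ act trivially on $\XX^{\alpha}_s$, so that the $\LUresSL2$-action on $\XX^{\alpha}_{s,r}$ factors through two commuting actions.

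Next I would invoke the two Clebsch--Gordan decompositions separately. On the classical factor, ordinary Clebsch--Gordan gives $V_{r_1}\otimes V_{r_2}=\bigoplus V_r$ with $r=|r_1-r_2|+1,|r_1-r_2|+3,\dots,r_1+r_2-1$. On the $\UresSL2$-factor, the tensor product formulas established in \cite{[FGST]} decompose $\XX^{\alpha}_{s_1}\otimes\XX^{\beta}_{s_2}$, $\XX^{\alpha}_{s_1}\otimes\PP^{\beta}_{s_2}$ and $\PP^{\alpha}_{s_1}\otimes\PP^{\beta}_{s_2}$ as direct sums of $\XX^{\alpha\beta}_s$ and $\PP^{\alpha\beta}_s$ with specific $s$-ranges and multiplicities $2$ and $4$. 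Combining the two decompositions and matching index ranges then yields the irreducible--irreducible case of the theorem directly.

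The main technical obstacle is the treatment of the projective modules, since $\PP^{\alpha}_{s,r}$ is \emph{not} simply $\PP^{\alpha}_s\otimes V_r$: the divided powers $E^{(p)}$, $F^{(p)}$ link the four $\UresSL2$-composition factors of $\PP^{\alpha}_s$ across adjacent $s\ell(2)$-weight strings, so that the head and the socle of $\PP^{\alpha}_{s,r}$ generally sit at different $s\ell(2)$-spins. I would describe this extended structure explicitly by lifting the $\UresSL2$-PBW basis of $\PP^{\alpha}_s$ to $\LUresSL2$, and then show that tensoring with $V_r$ effectively enlarges the classical Clebsch--Gordan $r$-range by $\pm 1$ (giving the $\bigoplus'$ of the theorem) or by $\pm 2$ (giving $\bigoplus''$), with the fractional boundary coefficients $\ffrac{1}{2}$, $\ffrac{3}{4}$ and $\ffrac{1}{4}$ coming from edge terms whose nominal $s\ell(2)$-spin lies outside the physical range and contributes only a partial Jordan block.

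Finally I would match the resulting double sums term-by-term against the stated formulas and perform a dimension consistency check using $\dim\XX^{\alpha}_{s,r}=sr$ and $\dim\PP^{\alpha}_{s,r}=2pr$ (for $1\le s\le p-1$). The most delicate bookkeeping is expected in the projective--projective case, where the overlapping $s$-ranges around the switchover at $\min(s_1+s_2-1,\,2p-s_1-s_2-1)$, the two distinct extended $r$-ranges $\bigoplus'$ and $\bigoplus''$ and the multiplicities $2$ and $4$ all interact; verifying that the boundary coefficients assemble correctly should reduce to a careful case analysis according to the parities of $s_1+s_2$ and $r_1+r_2$ encoded in $\gamma_1$ and $\gamma_2$.
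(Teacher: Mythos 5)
Your overall strategy --- separating the $\UresSL2$-direction from the classical $s\ell(2)$-direction and composing the two Clebsch--Gordan problems --- is the same one the paper follows, and your observations about the outer-tensor structure of $\XX^{\alpha}_{s,r}$ and about $\PP^{\alpha}_{s,r}$ \emph{not} being $\PP^{\alpha}_{s}\otimes V_r$ are both correct. But there is a genuine gap at the central step. You assert that because each irreducible carries two commuting actions, the tensor product of two such modules decomposes by ``invoking the two Clebsch--Gordan decompositions separately.'' This does not follow: the coproducts \eqref{e-comult}--\eqref{f-comult} of the classical generators contain cross terms $K^pE^{p-r}\tensor E^rK^{-r}$ and $K^{p+s}F^s\tensor F^{p-s}$, so the $s\ell(2)$-action on $\XX^{\alpha}_{s_1,r_1}\tensor\XX^{\beta}_{s_2,r_2}$ is \emph{not} the tensor-product action on the multiplicity spaces $V_{r_1}\tensor V_{r_2}$ whenever $s_1,s_2>1$; a priori it could glue the $\UresSL2$-summands into larger indecomposables. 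The paper closes this gap with two devices you do not supply: (i) the isomorphisms $\XX^{\pm}_{s,r}\cong\XX^{\pm}_{1,r}\tensor\XX^{\pm}_{s,1}$ and $\PP^{\pm}_{s,r}\cong\XX^{\pm}_{1,r}\tensor\PP^{\pm}_{s,1}$ of Lemma~\bref{fus-lemma}, proved by an explicit basis computation in which the cross terms vanish because $E$ and $F$ act as zero on the one-dimensional $\UresSL2$-direction of $\XX^{\pm}_{1,r}$; and (ii) the block argument (Lemma~\bref{rem:indecomp-mod-rest}) showing that in the $r_1=r_2=1$ case the $\UresSL2$-summands lie in pairwise distinct indecomposable subcategories, which forces the $e,f,h$-action to be the unambiguous semisimple one. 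Without (i) and (ii), ``combine the two decompositions'' is an unproved claim, not a proof.

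The projective cases have a second, related gap. Your plan to see the $\bigoplus'$ and $\bigoplus''$ ranges as an ``enlargement of the classical Clebsch--Gordan range by $\pm1$ or $\pm2$ with fractional edge coefficients'' describes the shape of the answer rather than deriving it. The paper instead computes $\XX^{\alpha}_{s_1,1}\tensor\PP^{\beta}_{s_2,1}$ and $\PP^{\alpha}_{s_1,1}\tensor\PP^{\beta}_{s_2,1}$ by tensoring against each irreducible constituent of the projective (top, middle $\XX^{-\beta}_{p-s_2,2}$, bottom), using that every projective $\LUresSL2$-module is also injective --- hence any projective summand arising from a constituent splits off --- and that a module tensored with a projective is projective, so the leftover irreducible constituents must reassemble into projective covers. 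Only after these $r=1$ computations does Lemma~\bref{fus-lemma} propagate the result to general $r$, and the half-integer boundary weights in $\bigoplus'$ and $\bigoplus''$ are then pure bookkeeping for the $\XX^{-\alpha}_{p-s,r\pm1}$ composition factors. You would need to carry out this subquotient analysis (or an equivalent one) explicitly; the dimension check $\dim\PP^{\alpha}_{s,r}=2pr$ you propose is a useful sanity test but cannot by itself fix the multiplicities $2$ and $4$ or the switchover ranges.
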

The $\LUresSL2$ representation category $\catC_p$ is a direct sum of
two full subcategories $\catC_p=\catC_p^+\oplus\catC_p^-$ such that
there are no morphisms between $\catC_p^+$ and $\catC_p^-$ and the
subcategory $\catC_p^+$ is closed under tensor products. The set of
irreducible modules belonging to the subcategory $\catC_p^+$ is
exhausted by the irreducible modules $\XX^{\alpha}_{s,r}$ with
$\alpha=+$ whenever $r$ is \textit{odd}, and $\alpha=-$ whenever $r$ is
\textit{even}.

The category $\catC_p^+$ is equivalent as a tensor category to the
category of Virasoro algebra representations appearing in
$\LM(1,p)$. We do not describe here this Virasoro category but note
only that under this equivalence irreducible and projective modules
are identified in the following way
\begin{equation}\label{identification}
\begin{split}
\repX^+_{p,2r-1}\to \modR^0_{2r-1},\quad
\repX^-_{p,2r}\to \modR^0_{2r},\quad
\PP^+_{s,2r-1}\to \modR^{p-s}_{2r-1},\quad
\PP^-_{p-s,2r}\to \modR^{s}_{2r},\\
\repX^+_{s,2r-1}\to(2r-1,s),\qquad
\repX^-_{s,2r}\to(2r,s), \qquad 1\leq s\leq p,\quad r\geq 1,
\end{split}
\end{equation}
where $(r,s)$ are the irreducible Virasoro modules with the heighest
weights $\Delta_{r,s} = ((pr-s)^2-(p-1)^2)/4p$ and the $\modR^{s}_{r}$
are logarithmic Virasoro modules from $\LM(1,p)$.  Under this
identification, the fusion \cite{RPfus} for $\LM(1,p)$ is given by
tensor products of the corresponding $\LUresSL2$ representations.

The quantum groups dual to the logarithmic conformal models $\LM(1,p)$
as well as $\WLM(1,p)$ can be constructed in the free field
approach~\cite{[FGST],[FGST3]}. In this approach, the corresponding
quantum groups and the chiral algebras are mutual \textit{maximal
centralizers} of each other. To construct the quantum group for
$\LM(1,p)$, we first note that the chiral algebras $\algW_p$ realized
in the $\WLM(1,p)$ models admit $SL(2)$-action by
symmetries. Invariants of this action is the universal enveloping of
the Virasoro algebra $\Vir$ with
the central charge
  $c_p=13-6p-6/p$.
This suggests that a quantum group $\LUresSL2$ dual to the Virasoro
algebra from $\LM(1,p)$ should be a combination of the quantum group
$\UresSL2$ dual to $\algW_p$ and ordinary $s\ell(2)$.

The quantum group $\LUresSL2$ is constructed in the free-field
approach using the screening operators of the Virasoro algebra $\Vir$
with the central charge $c_p$. We recall there are two screening
operators $e=\oint e^{\sqrt{2p}\,\varphi(z)}dz$\, and $F=\oint
e^{-\sqrt{\frac{2}{p}}\,\varphi(z)}dz$\, commuting with $\Vir$ and the
screening $F$ commutes~\cite{[FHST]} with the extended chiral algebra
$\algW_p$ and generates the lower-triangular part of the $\UresSL2$
with the relation $F^p=0$. Then, considering the deformation
$F_\epsilon=\oint e^{(-\sqrt{\frac{2}{p}}+\epsilon) \varphi(z)}dz$, we
can construct an operator $f=\lim\limits_{\epsilon\to0}
\frac{F_\epsilon^p}{\epsilon}$.  The operators $e$ and $f$ generate
the $s\ell(2)$ from the previous paragraph. To obtain a Hopf-algebra
structure on $\LUresSL2$, we use here the purely algebraic approach
following Lusztig.  We construct the quantum group $\LUresSL2$ as a
limit of the quantum group $U_{\q}(s\ell(2))$ as $\q\to
e^{\frac{\imath\pi}{p}}$.  There is an evident limit in which $E^p$,
$F^p$ and $K^{p}$ become central but we consider another limit in
which
the relations $E^p=F^p=0$, $K^{2p}=1$ are imposed but the generators
$e=\frac{E^p}{[p]!}$ and $f=\frac{F^p}{[p]!}$ are kept in the limit.
In the limit $\q\to e^{\frac{\imath\pi}{p}}$, we have $[p]!=0$ and the
ambiguity~$\frac{0}{0}$ is solved in such a way that the $e$ and $f$
become generators of the ordinary $s\ell(2)$. We thus obtain a Hopf
algebra $\LUresSL2$ that contains the quantum group $\UresSL2$ as a
Hopf ideal and the quotient is the $U(s\ell(2))$, the universal
enveloping of the $s\ell(2)$.

It is noteworthy to mention that a similar duality is also presented
in quantum spin chains (XXZ models) with nondiagonalizable action of
the Hamiltonian. Here, there are two commuting actions of a
Temperley--Lieb algebra (or a proper its extension) and
of a quantum group. In other words, the space of spin states is a
bimodule over the Temperley--Lieb algebra (or its extension) and the
corresponding quantum group~\cite{RS-Q}. Moreover, the quantum
group symmetries are stable with respect to increasing the number of
sites and should be kept in a scaling limit; fusion rules for
Temperley--Lieb algebra representations are obtained by an induction
procedure joining two chains end to end and can be calculated using
only the quantum group symmetries~\cite{RS-Q,RS2}. Thus, these
two algebraic objects
are in some duality which is a lattice version of the Kazhdan--Lusztig
duality.

We also note that tensor products of $\LUresSL2$ projective modules
reproduce the Virasoro fusion rules proposed in~\cite{RS-Q} under the
identifications:
\begin{align*}
&p=2:&\qquad
\PP^+_{1,2r-1} \to \modVir_{2r-1}, \qquad\PP^-_{1,2r} \to
    \modVir_{2r},\qquad r\in\oN,\\
&p=3:&\quad\XX^+_{3,2r-1} \to \modVir_{3r-2},\qquad
\PP^+_{1,2r-1} \to \modVir_{3r-1},\qquad
\PP^-_{2,2r} \to \modVir_{3r},\qquad r\in\oN.
\end{align*}

The paper is organized as follows. In Sec.~\ref{sec:QG-VOA}, we
introduce the quantum group $\LUresSL2$ dual to the Virasoro algebra
$\Vir$ and describe a Hopf algebra structure on $\LUresSL2$.  In
Sec.~\ref{sec:LU-rep}, we describe irreducible representations of
$\LUresSL2$, calculate all possible extensions between them and then
construct projective modules. In Sec.~\ref{sec:fusion}, we
decompose tensor products between irreducible and projective
$\LUresSL2$-modules.

\section{Quantum groups as centralizers of VOAs.\label{sec:QG-VOA}}
In this section, we introduce a quantum group that commutes with the
Virasoro algebra $\Vir$ on the chiral space of states in the free
massless scalar field theory
\begin{equation*}
\varphi(z)\varphi(w) = \log(z-w)
\end{equation*}
with the energy-momentum tensor
\begin{equation}
 T=\half\partial\varphi\partial\varphi+\frac{\alpha_0}{2}\partial^2\varphi,
\end{equation}
where the background charge $\alpha_0 = \alpha_+ + \alpha_- =
\sqrt{2p} - \sqrt{2/p}$. This quantum group is denoted as $\LUresSL2$
and constructed as some extension of the finite-dimensional quantum
group $\UresSL2$ which is the maximal centralizer of the triplet
algebra $\Walg$.  We recall that $\Walg$ is an extension of $\Vir$ by
the $s\ell(2)$-triplet of the fields~$W^{\pm,0}(z)$~\cite{[FHST]}:
\begin{equation*}
  W^-(z):=\rme^{-\alpha_+\varphi}(z),\quad\;
  W^0(z):=[S_+,W^-(z)],\quad\;
  W^+(z):=[S_+,W^0(z)],
\end{equation*}
where $S_+$ is the ``long screening'' operator
$\oint\rme^{\alpha_+\varphi}dz$. These three fields are Virasoro
primaries and their conformal dimensions equal to $(2p{-}1)$. The
field $W^-(z)$ is the lowest-weight vector (with respect to the Cartan
$s\ell(2)$-generator $h=\frac{1}{\alpha_+}\, \varphi_0$ and $\varphi_0$
is the zero-mode of $\partial\varphi(z)$), the field $W^0(z)$ has the
$h$-weight equals to~$0$ and $W^+(z)$ is the highest-weight vector of
the $s\ell(2)$-triplet. 

Irreducible representations of the triplet algebra $\Walg$ admit two
commuting actions, $s\ell(2)$- and $\Vir$-actions~\cite{[FGST3]}, and
the Virasoro algebra $\Vir$ is the invariant of the $s\ell(2)$ action
in the vacuum representation of $\Walg$. This suggests a construction
of the maximal centralizer for $\Vir$ as an extension of the
centralizer $\UresSL2$ for the triplet algebra $\Walg$ by the
$s\ell(2)$ triplet: $e=S_+$, $h=\frac{1}{\alpha_+}\, \varphi_0$, and a
``conjugate'' operator $f$ to the long screening $S_+$.

\subsection{The centralizer of $\Walg$} We recall the definition
of the quantum group $\UresSL2$~\cite{[FGST]} that commutes with the
triplet algebra $\Walg$ action on the chiral space of states.  The
$\UresSL2$ can be constructed as the Drinfeld double of the Hopf
algebra generated by the ``short screening'' operator
$F=\oint\rme^{\alpha_-\varphi(z)}dz$ and
$K=\rme^{-i\pi\alpha_-\varphi_0}$. The Hopf algebra structure is found
from the action of these operators on fields.  In particular, the
comultiplication is calculated from the action of $F$ and $K$ on
operator product expansions of fields. Details of constructing
$\UresSL2$ are given in~\cite{[FGST]}.

The quantum group $\UresSL2$ is the ``restricted'' quantum $s\ell(2)$
with $\q = e^{i\pi/p}$ and the three generators $E$, $F$, and
$K$ satisfying the standard relations for the quantum $s\ell(2)$,
\begin{equation}\label{Uq-com-relations}
  KEK^{-1}=\q^2E,\quad
  KFK^{-1}=\q^{-2}F,\quad
  [E,F]=\ffrac{K-K^{-1}}{\q-\q^{-1}},
\end{equation}
with some additional constraints,
\begin{equation}\label{root-1-rel}
  E^{p}=F^{p}=0,\quad K^{2p}=\one,
\end{equation}
and the Hopf-algebra structure is given by
\begin{gather}
  \Delta(E)=\one\otimes E+E\otimes K,\quad
  \Delta(F)=K^{-1}\otimes F+F\otimes\one,\quad
  \Delta(K)=K\otimes K,\label{Uq-comult-relations}\\
  S(E)=-EK^{-1},\quad  S(F)=-KF,\quad S(K)=K^{-1},
  \label{Uq-antipode}\\
  \epsilon(E)=\epsilon(F)=0,\quad\epsilon(K)=1.\label{Uq-epsilon}
\end{gather}

The quantum group $\UresSL2$ admits $(p-1)$ dimensional family of
inequivalent extensions by $s\ell(2)$ algebras acting on $\UresSL2$ as
exterior derivatives~\cite{[DFGT]}. There is an extension that admits a
Hopf algebra structure to be defined in the following subsection.

\subsection{The centralizer of $\Vir$} Here, we define a quantum group 
$\LUresSL2$ (i.e. a Hopf algebra) that commutes with the Virasoro
algebra $\Vir$ on the chiral space of states.

\subsubsection{Definition} The Hopf-algebra
structure on $\LUresSL2$ is the following. The defining relations
between the $E$, $F$, and $K$ generators are the same as in $\UresSL2$
and given in~\eqref{Uq-com-relations} and~\eqref{root-1-rel},
and the usual $s\ell(2)$ relations between the $e$, $f$, and $h$:
\begin{equation}\label{sl2-rel}
  [h,e]=e,\qquad[h,f]=-f,\qquad[e,f]=2h,
\end{equation}
and the ``mixed'' relations
\begin{gather}
  [h,K]=0,\qquad[E,e]=0,\qquad[K,e]=0,\qquad[F,f]=0,\qquad[K,f]=0,\label{zero-rel}\\
  [F,e]= \ffrac{1}{[p-1]!}K^p\ffrac{\q K-\q^{-1} K^{-1}}{\q-\q^{-1}}E^{p-1},\label{Fe-rel}\\
  [E,f]=\ffrac{(-1)^{p+1}}{[p-1]!} F^{p-1}\ffrac{\q K-\q^{-1} K^{-1}}{\q-\q^{-1}},
    \label{Ef-rel}\\
  [h,E]=\frac{1}{2}EA,\quad[h,F]=- \frac{1}{2}AF,\label{hE-hF-rel}
\end{gather}
where 
\begin{equation}\label{A-element}
  A=\,\sum_{s=1}^{p-1}\ffrac{(u_s(\q^{-s-1})-u_s(\q^{s-1}))K
        +\q^{s-1}u_s(\q^{s-1})-\q^{-s-1}u_s(\q^{-s-1})}{(\q^{s-1}
         -\q^{-s-1})u_s(\q^{-s-1})u_s(\q^{s-1})}\,
        u_s(K)\idem_s
\end{equation}
with $u_s(K)=\prod_{n=1,\;n\neq s}^{p-1}(K-\q^{s-1-2n})$, and
$\idem_s$ are the central primitive idempotents of $\UresSL2$ given in
App.~\bref{app:idem}.

The comultiplication in $\LUresSL2$ is given
in~\eqref{Uq-comult-relations} for the $E$, $F$, and $K$ generators
and
\begin{gather}
  \Delta(e)=e\tensor1+K^p\tensor e
  +\ffrac{1}{[p-1]!} \sum_{r=1}^{p-1}\frac{\q^{r(p-r)}}{[r]}K^pE^{p-r}\tensor E^r
  K^{-r},\label{e-comult}\\
 \Delta(f)= f\tensor 1+K^p\tensor f+\ffrac{(-1)^p}{[p-1]!} 
  \sum_{s=1}^{p-1}\frac{\q^{-s(p-s)}}{[s]}K^{p+s}F^s\tensor F^{p-s},\label{f-comult}
\end{gather}
an explicit form of $\Delta(h)=\half[\Delta(e),\Delta(f)]$ is very
bulky and we do not give it here. 

The antipode $S$ and the counity $\epsilon$ are given
in~\eqref{Uq-antipode}-\eqref{Uq-epsilon} and
\begin{gather}
  S(e)=-K^pe,\qquad S(f)=-K^pf,\qquad S(h)=-h,\\
  \epsilon(e)=\epsilon(f)=\epsilon(h)=0.\label{relations-end}
\end{gather}


\subsubsection{$\LUresSL2$ through divided powers} The quantum group
$\LUresSL2$ can be realized as the Lusztig extension of the restricted
quantum group $\UresSL2$ by divided powers of the $E$ and $F$
generators. In the usual quantum $\UqSL2$ with the relations
\eqref{Uq-com-relations}, \eqref{Uq-comult-relations},
\eqref{Uq-antipode}, and \eqref{Uq-epsilon} and with generic $\q$, the
$e$, $f$ and $h$ exist as the following elements,
\begin{equation*}
  e=\ffrac{1}{[p]!}K^pE^p,\qquad
  f=\ffrac{(-1)^p}{[p]!}F^p,
\end{equation*}
and the Cartan element
\begin{multline*}
 \!\!\! h=\half[e,f]=
  \ffrac{1-\q^{2p}}{2}ef+\ffrac{\q^{2p}}{2p(\q-\q^{-1})}
  \Bigl(\ffrac{\one - (-\q^p)^{p-1}K^{2p}}{[p]}+
  \sum_{r=1}^{p-1}\ffrac{(-1)^r\q^{r(p-1)}[p-1]!}{[p-r]![r]!}K^{2r}\Bigr)+\\
  +\ffrac{(-1)^{p+1}}{2[p-1]!}
  K^p\sum_{n=1}^{p-1}\ffrac{(-1)^n[p-1]!}{([p-n]!)^2[n]!}
  \prod_{k=0}^{p-n-1}\ffrac{\cas-\q^{-(2k+1)}K-\q^{(2k+1)}K^{-1}}{(\q-\q^{-1})^2}
  \prod_{r=1}^n\ffrac{\q^{r-1}K-\q^{-r+1}K^{-1}}{\q-\q^{-1}},
\end{multline*}
where the $\UqSL2$ Casimir element $\cas$ is given in App.~\bref{app:idem}.
The commutation relations between these elements and their
comultiplication, antipode, and counity satisfy
\eqref{sl2-rel}-\eqref{relations-end} when $\q\to e^{i\pi/p}$. 

\section{Representations of $\LUresSL2$\label{sec:LU-rep}}
To describe the category $\catC_p$ of finite-dimensional
$\LUresSL2$-modules, we first study irreducible $\LUresSL2$-modules
in~\bref{sec:irreps} and then obtain essential information about
possible extensions between them in~\bref{sec:exts}. This let us
construct all finite-dimensional projective $\LUresSL2$-modules
in~\bref{proj-mod}. Then in~\bref{sec:cat-decomp}, we decompose the
representation category $\catC_p$ into a direct sum of two full
subcategories $\catC_p^+$ and $\catC_p^-$. The subcategory $\catC_p^+$
is then identified with a tensor category of the Virasoro algebra
representations.

\subsection{Irreducible $\LUresSL2$-modules}\label{sec:irreps}
An irreducible $\LUresSL2$-module $\repX^{\pm}_{s,r}$ is labeled by
$(\pm,s,r)$, with $1\leq s\leq p$ and $r\in\oN$, and has the highest
weights $\pm\q^{s-1}$ and $\frac{r-1}{2}$ with respect to $K$ and $h$
generators, respectively. The $sr$-dimensional module $\XX^{\pm}_{s,r}$
is spanned by elements $\stprp_{n,m}^{\pm}$, $0\leq n\leq s{-}1$,
$0\leq m\leq r{-}1$, where $\stprp_{0,0}^{\pm}$ is the highest-weight
vector and the left action of the algebra on $\XX^{\pm}_{s,r}$ is
given~by
\begin{align}
  K \stprp_{n,m}^{\pm} &=
  \pm \q^{s - 1 - 2n} \stprp_{n,m}^{\pm},\qquad
  &h\, \stprp_{n,m}^{\pm} &=  \half(r-1-2m)\stprp_{n,m}^{\pm},\label{basis-lusz-irrep-1}\\
  E \stprp_{n,m}^{\pm} &=
  \pm [n][s - n]\stprp_{n - 1,m}^{\pm},\qquad
  &e\, \stprp_{n,m}^{\pm} &=  m(r-m)\stprp_{n,m-1}^{\pm},\label{basis-lusz-irrep-2}\\
  F \stprp_{n,m}^{\pm} &= \stprp_{n + 1,m}^{\pm},\qquad
  &f\, \stprp_{n,m}^{\pm} &=  \stprp_{n,m+1}^{\pm},\label{basis-lusz-irrep-3}
\end{align}
where we set $\stprp_{-1,m}^{\pm}=\stprp_{n,-1}^{\pm}
=\stprp_{s,m}^{\pm} =\stprp_{n,r}^{\pm}=0$.

\begin{rem}
The element $A$ defined in~\eqref{A-element}
and~\eqref{hE-hF-rel} is represented in an irreducible representation
of $\LUresSL2$ by an operator acting as the identity on the
highest-weight vector and zero on all other vectors. Therefore, as it
follows from the relations~\eqref{zero-rel}-\eqref{hE-hF-rel}, the
$E$, $F$, and $K$ generators of the subalgebra $\UresSL2$ commute on
$\XX^{\pm}_{s,r}$ with the $e$, $f$, and $h$ generators of the
subalgebra $s\ell(2)$.
\end{rem}

\subsection{Extensions among irreducibles}\label{sec:exts}
Here, we study possible extensions between irreducible
$\LUresSL2$-modules to construct indecomposable modules in what
follows. Let $A$ and $C$ be left $\LUresSL2$-modules.  We say that a
short exact sequence of $\LUresSL2$-modules $0\to A\to B\to C\to 0$
is an \textit{extension} of $C$ by $A$, and we let $\Ext(C,A)$ denote
the set of equivalence classes (see, e.g.,~\cite{[M]}) of extensions
of $C$ by~$A$.

\begin{lemma}\label{lemma:exts}
For $1 \leq s\leq p-1$, $r\in \oN$ and $\alpha,\alpha'\,{\in}\,\{+,-\}$,
  there are vector-space isomorphisms
  \begin{equation*}
    \Ext(\repX^{\alpha}_{s,r},\repX^{\alpha'}_{s',r'})\cong
    \begin{cases}
      \oC,\quad \alpha'=-\alpha, \; s'=p-s, \; r'=r\pm 1,\\
      0,\quad \text{otherwise}.
    \end{cases}
  \end{equation*}
There are no nontrivial extensions between $\XX^{\pm}_{p,r}$ and any
irreducible module.
\end{lemma}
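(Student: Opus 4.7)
The plan is to analyze any short exact sequence $0 \to \XX^{\alpha'}_{s', r'} \to B \to \XX^\alpha_{s, r} \to 0$ by restricting it to the two natural subalgebras $\UresSL2 \subset \LUresSL2$ and $U(s\ell(2)) \subset \LUresSL2$. The $s\ell(2)$-restriction is automatically split by semisimplicity of finite-dimensional $s\ell(2)$-representations. The $\UresSL2$-restriction is an extension of $(\XX^\alpha_s)^{\oplus r}$ by $(\XX^{\alpha'}_{s'})^{\oplus r'}$, and by the known $\mathrm{Ext}^1$-structure of $\UresSL2$ (from \cite{[FGST]}) it is split in every configuration except $(\alpha', s') = (-\alpha, p-s)$ with $1 \leq s \leq p-1$.

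In every case where the $\UresSL2$-restriction already splits---in particular whenever $s = p$ or $s' = p$, whenever the two $\UresSL2$-types are unrelated by the $s \leftrightarrow p-s$ pairing, and when they coincide---the extension turns out to be trivial. The key observations are that $E^{p-1}$ and $F^{p-1}$ act by zero on every irreducible $\XX^\pm_{s,r}$ (directly for $s<p$; for $s=p$ because although $E^{p-1}$ is nonzero on $\XX^\pm_p$, its image lies in the $K$-eigenspace on which the factor $(\q K - \q^{-1} K^{-1})/(\q - \q^{-1})$ appearing in \eqref{Fe-rel}--\eqref{Ef-rel} vanishes), and that $EA = AF = 0$ holds on every irreducible. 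These vanishings persist on any $\UresSL2$-isotypic direct sum, so all the mixed commutators \eqref{Fe-rel}--\eqref{hE-hF-rel} vanish on such a $B$; the $\UresSL2$- and $s\ell(2)$-actions therefore commute on $B$, and the $\UresSL2$-isotypic decomposition becomes $\LUresSL2$-equivariant. This splits $B$ directly when $(\alpha, s) \neq (\alpha', s')$; when they coincide, $B \cong \XX^\alpha_s \otimes V$ with $V$ a finite-dimensional $s\ell(2)$-extension, which splits by semisimplicity. The statement about $\XX^\pm_{p,r}$ falls out here.

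The essential case is $(\alpha', s') = (-\alpha, p-s)$ with $s < p$. Set $V_{\mathrm{top}} := B \cap \ker(K - \alpha \q^{s-1})$; this $K$-eigenvalue does not occur in $X = \XX^{-\alpha}_{p-s, r'}$, so $V_{\mathrm{top}}$ has dimension exactly $r$, and since $[K, s\ell(2)] = 0$ it is an $s\ell(2)$-submodule isomorphic to the $r$-dimensional irrep $V_r$. The $\UresSL2$-extension class is encoded by two linear maps $\phi_F := F^s|_{V_{\mathrm{top}}}$ and $\phi_E := E|_{V_{\mathrm{top}}}$, landing in the top and bottom $K$-eigenspaces of $X$ (each an $s\ell(2)$-irrep isomorphic to $V_{r'}$). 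The zero commutators $[F, f] = [E, e] = 0$ force $\phi_F$ to commute with $f$ and $\phi_E$ with $e$; the $h$-relations \eqref{hE-hF-rel}, combined with $A|_{V_{\mathrm{top}}} = \mathrm{id}$, pin the $h$-weight of $\phi_F(v)$ and $\phi_E(v)$ at each $v \in V_{\mathrm{top}}$, and a short weight-count at the extremal vectors $v_0$ and $v_{r-1}$---using the $e/f$-equivariances---shows that both maps vanish identically unless $r' \in \{r-1, r+1\}$. In each of those two cases, $\phi_E$ and $\phi_F$ each give an independent one-parameter family of $\UresSL2$-extensions, and the remaining non-zero mixed relation $[E, f]$ (through its $F^{p-1}$-factor, which on $V_{\mathrm{top}}$ equals $F^{p-s-1} \circ \phi_F$) delivers a single linear equation relating the two parameters with coefficient $\tfrac{(-1)^{p+1}\alpha[s]}{[p-1]!}$, which is nonzero throughout $1 \leq s \leq p-1$; the joint solution space is thus one-dimensional, giving $\dim\mathrm{Ext}^1 = 1$. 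The main obstacle is the careful tracking of the quantum-integer normalizations (especially those hidden in \eqref{A-element}) needed to make both the $h$-weight obstruction argument and the non-vanishing of the coupling coefficient watertight.
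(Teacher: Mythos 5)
Your route is genuinely different from the paper's. The paper disposes of the whole lemma homologically: it notes that $\LUresSL2$ acts on projective resolutions of irreducible $\UresSL2$-modules, so each $\ExtU(\repX^{\alpha}_{s,r},\repX^{\alpha'}_{s',r'})$ carries an $s\ell(2)$-action (it is $\XX_2\tensor\XX_r\tensor\XX_{r'}$ in the nontrivial case, the $\XX_2$ factor being the doublet structure on $\ExtU(\repX^{\pm}_s,\repX^{\mp}_{p-s})$), and then applies the Serre--Hochschild spectral sequence for $\UresSL2\subset\LUresSL2$ with semisimple quotient $s\ell(2)$ to get $\Ext = H^0(s\ell(2),\ExtU)$; the answer is the multiplicity of the trivial module in $\XX_2\tensor\XX_r\tensor\XX_{r'}$, which is $1$ iff $r'=r\pm1$. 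Your cocycle analysis is a hands-on shadow of the same computation: the $\pm\tfrac12$ $h$-weight shifts you extract for $\phi_E$ and $\phi_F$ from \eqref{hE-hF-rel} are exactly the doublet structure on the $\XX_2$ factor, and your equivariance-plus-weight count reproduces the Clebsch--Gordan condition. What the paper's route buys is that the degeneration of the spectral sequence settles in one stroke that the invariant subspace is the \emph{entire} answer --- no further relations need to be checked; what your route buys is an explicit model of the nontrivial extension. Your treatment of the split cases and of $\XX^{\pm}_{p,r}$ is fine, as is the reduction of the cocycle to the pair $(\phi_E,\phi_F)$ (the $K$-eigenvalues of the sub and the quotient are disjoint, so the $s\ell(2)$-part of any cocycle can be normalized away and there are no residual coboundaries).

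There is, however, a genuine gap in the final count of the essential case. You impose only the relation $[E,f]$ on $V_{\mathrm{top}}$ and conclude that the two-parameter space $(\lambda,\mu)$ of candidate pairs $(\phi_E,\phi_F)$ is cut down to a line. But the mirror relation \eqref{Fe-rel} is \emph{not} automatically satisfied on such a $B$: on the $K$-eigenspace at the bottom of the quotient constituent ($n=s-1$), $E^{p-1}$ acts nontrivially through the chain that passes through $\phi_E$ and climbs the submodule, and its image lands in the top $K$-eigenspace of $\XX^{-\alpha}_{p-s,r'}$, where the factor $(\q K-\q^{-1}K^{-1})/(\q-\q^{-1})$ equals $-\alpha[s]\neq0$. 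Evaluating $[F,e]$ there yields a \emph{second} linear relation, equating the $e$-equivariance defect of $\phi_F$ (proportional to $\mu$) with a multiple of $\lambda$. If this relation were independent of the $[E,f]$ relation, the solution space would be $0$, not $\oC$; and the $h$-relations away from $V_{\mathrm{top}}$ could in principle add further constraints. So your argument as written proves $\dim\Ext\leq1$ but not $\dim\Ext=1$. To close the gap you must either verify that the two mixed relations are proportional (a computation with the same normalization constants you already flagged), or simply exhibit one nontrivial extension for $r'=r\pm1$ --- e.g.\ the explicit module displayed in the paper's example following the lemma, or a quotient of the projective cover $\PP^{\pm}_{s,r}$ --- to supply the lower bound independently.
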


\begin{proof}
We first recall~\cite{[FGST2]} that the space $\ExtU$ of extensions
between irreducible modules over the subalgebra $\UresSL2$ is at most
two-dimensional and there exists a nontrivial extension only between
$\repX^{\pm}_s$ and $\repX^{\mp}_{p-s}$, where $1 \leq s\leq p-1$ and
we set $\repX^{\pm}_s{=}\repX^{\pm}_{s,1}|_{\UresSL2}$. Moreover,
there is an action of $\LUresSL2$ on projective resolutions for
irreducible $\UresSL2$-modules and this generates an action of the
quotient-algebra $s\ell(2)$ on the corresponding cochain complexes and
their cohomologies. Therefore, for an irreducible $\repX$ and an
$\UresSL2$-module $\modM$, all extension groups
$\ExtUbul(\repX,\modM)$ are $s\ell(2)$-modules. In particular, the
space $\ExtU(\repX^{\pm}_s,\repX^{\mp}_{p-s})$ is the
$s\ell(2)$-\,doublet and there is a nontrivial $s\ell(2)$ action on
all Hochschild cohomologies of $\UresSL2$ (see~\cite{[DFGT]}).

Next, to calculate the first extension groups between the irreducible
$\LUresSL2$-modules, we use the Serre-Hochschild spectral sequence
with respect to the subalgebra $\UresSL2$ and the quotient-algebra
$s\ell(2)$. The spectral sequence is degenerate at the second term due
to the semisimplicity of the quotient algebra and we thus obtain
\begin{equation*}
\Ext(\repX^{\alpha}_{s,r},\repX^{\alpha'}_{s',r'}) = H^0(s\ell(2),
\ExtU(\repX^{\alpha}_{s,r},\repX^{\alpha'}_{s',r'})),
\end{equation*}
where the right-hand side is the vector space of the
$s\ell(2)$-invariants in the $s\ell(2)$-module
$\ExtU(\repX^{\alpha}_{s,r},\repX^{\alpha'}_{s',r'})$. This module is
nonzero only in the case $\alpha'=-\alpha$, $s'=p-s$ and isomorphic to
the tensor product $\repX_2\tensor\repX_r\tensor\repX_{r'}$ of the
$s\ell(2)$ modules, where $\repX_r$ is the $r$-dimensional
module. Obviously, the tensor product contains a trivial
$s\ell(2)$-module only in the case $r'=r\pm 1$. This completes the proof.
\end{proof}

\subsubsection{Example} 
As an example, we describe an extension of $\XX^{\pm}_{s,r}$ by
$\XX^{\mp}_{p-s,r+1}$. This can be realized as an extension of $r$
copies $\repX^{\pm}_s$ by $(r+1)$ copies $\repX^{\mp}_{p-s}$ of the
irreducible modules over the subalgebra $\UresSL2$,
 \begin{equation*}\label{schem-Weyl}
   \xymatrix@=12pt{
     &\stackrel{\repX^{\pm}_{s}}{\circ}\ar@/_/[dr]^{F} \ar@/^/[dl]_{E} \ar[rr]^{f}&
     &\stackrel{\repX^{\pm}_{s}}{\circ}\ar@/^/[dl]_{E}
     \ar@/_/[dr]^{F} \ar[rr]^{f}&&\dots\ar@/^/[dl]_{E}\ar@/_/[dr]^{F} \ar[rr]^{f}&
     &\stackrel{\repX^{\pm}_{s}}{\circ}
     \ar@/^/[dl]_{E} \ar@/_/[dr]^{F}&\\
     \stackrel{\repX^{\mp}_{p-s}}{\bullet} \ar[rr]_{f}&&
     \stackrel{\;\repX^{\mp}_{p-s}\;}{\bullet} \ar[rr]_{f}&&
     \stackrel{\;\repX^{\mp}_{p-s}\;}{\bullet}\ar[r]_{f}&\dots\ar[r]_{f}&
     \stackrel{\;\repX^{\mp}_{p-s}\;}{\bullet} \ar[rr]_{f}&&
     \stackrel{\repX^{\mp}_{p-s}}{\bullet}
   }
 \end{equation*}
with the indicated action of the $E$ and $F$ generators mixing
$\repX^{\pm}_s$ with $\repX^{\mp}_{p-s}$ modules, and with the $e$ and
$f$ generators mapping different copies with the same sign. The
extension thus constructed can be depicted as
 \begin{equation*}\label{schem-LWeyl}
\xymatrix@=10pt{
     \stackrel{\repX^{\pm}_{s,r}}{\circ}\ar@/^/[dr]&\\
     &\stackrel{\repX^{\mp}_{p-s,r+1}}{\bullet}}
\end{equation*}
with the convention that the arrow is directed to the submodule in the
bottom marked by~$\bullet$, in contrast to the subquotient~$\circ$ in
the top.


\subsection{Projective $\LUresSL2$-modules}\label{proj-mod}
We next construct projective $\LUresSL2$-modules as projective covers
of irreducible modules. A projective cover of an irreducible module is
a ``maximal'' indecomposable module that can be mapped onto the
irreducible. Lem.~\bref{lemma:exts} state that we can ``glue'' two
irreducible modules into an indecomposable module only in the case if
the irreducibles have opposite signs of the $\alpha$-index, the
difference between the two $r$-indexes equals to one and the sum of
the two $s$-indexes is equal to $p$.  Therefore, to construct a
projective cover for $\repX^{\pm}_{s,r}$, $1\leq s\leq p-1$ and $r\geq
2$, we first have to obtain a nontrivial extension
of~$\repX^{\pm}_{s,r}$ by the maximal number of irreducible modules,
that is, by
\begin{equation*}
\repX^{\mp}_{p-s,r-1}\boxtimes
\Ext(\repX^{\pm}_{s,r},\repX^{\mp}_{p-s,r-1}) \oplus
\repX^{\mp}_{p-s,r+1}\boxtimes
\Ext(\repX^{\pm}_{s,r},\repX^{\mp}_{p-s,r+1}),
\end{equation*}
which is
  \begin{equation*}
    0\to \repX^{\mp}_{p-s,r-1}\oplus\repX^{\mp}_{p-s,r+1}\to
    \modM^{\pm}_{s,r} \to \repX^{\pm}_{s,r} \to 0.
  \end{equation*}
where $\modM^{\pm}_{s,r}$ is an indecomposable module. Next, to find
the projective cover of $\repX^{\pm}_{s,r}$, we extend the submodule
$\repX^{\mp}_{p-s,r-1}\oplus\repX^{\mp}_{p-s,r+1}
\subset\modM^{\pm}_{s,r}$ by the maximal number of irreducible
modules, that is, by $\repX^{\pm}_{s,r-2}\oplus
2\repX^{\pm}_{s,r}\oplus \repX^{\pm}_{s,r+2}$.  The compatibility with
the $\LUresSL2$-algebra relations (with $F^p\,{=}\,E^{p}\,{=}\,0$ and
\eqref{hE-hF-rel} in particular) leads to an extension corresponding
to the module $\PP^{\pm}_{s,r}$ with the following subquotient
structure:
\begin{equation}\label{schem-proj}
  \xymatrix@=12pt{
    &&\stackrel{\XX^{\pm}_{s,r}}{\bullet}
    \ar@/^/[dl]
    \ar@/_/[dr]
    &\\
    &\stackrel{\XX^{\mp}_{p - s, r - 1}}{\circ}\ar@/^/[dr]
    &
    &\stackrel{\XX^{\mp}_{p - s, r + 1}}{\circ}\ar@/_/[dl]
    \\
    &&\stackrel{\XX^{\pm}_{s,r}}{\bullet}&
  }
\end{equation}
and the $\LUresSL2$ action is explicitly described in
App.~\bref{app:proj-mod-base}.  This module being restricted to the
subalgebra $\UresSL2$ is a direct sum of projective modules that
covers the direct sum $\oplus_{i=1}^r \repX^{\pm}_s$, where we set
$\repX^{\pm}_s = \repX^{\pm}_{s,1}|_{\UresSL2}$. Therefore, the
$\PP^{\pm}_{s,r}$ module is the projective cover of
$\repX^{\pm}_{s,r}$, for $1\leq s\leq p-1$ and $r\geq 2$.
Similar procedure gives the projective cover $\PP^{\pm}_{s,1}$ for the
irreducible module $\XX^{\pm}_{s,1}$ with the following subquotient
structure:
\begin{equation}\label{schem-proj-sm}
  \xymatrix@=12pt{
    &\stackrel{\XX^{\pm}_{s,1}}{\bullet}\ar[d]_{}
    \\
    &\stackrel{\XX^{\mp}_{p - s, 2}}{\circ}\ar[d]^{}
    \\
    &\stackrel{\XX^{\pm}_{s,1}}{\bullet}
  }
\end{equation}
and the $\LUresSL2$ action is also explicitly described in
App.~\bref{app:proj-mod-base}.

A ``half'' of these projective modules is then identified in the
fusion algebra calculated below in Sec.~\bref{sec:fusion} with some
logarithmic Virasoro representations.

\begin{rem}\label{rem:proj-param} We note there are no additional
parameters distinguishing nonisomorphic indecomposable
$\LUresSL2$-modules with the same subquotient structure as
in~\eqref{schem-proj} and~\eqref{schem-proj-sm}. This trivially
follows from the fact that all infinitesimal deformations of
homomorphisms $f:\LUresSL2\to\End(\PP^{\pm}_{s,r})$ continuing
infinitesimal deformations of the algebra $\End(\PP^{\pm}_{s,r})$ are
in one-to-one correspondence with elements in the cohomology space
$H^1(\LUresSL2,\End(\PP^{\pm}_{s,r}))$.
Since the module $\End(\PP^{\pm}_{s,r})$ is a direct sum of projective
modules then the cohomologies space $H^1$ is trivial.
\end{rem}

\subsubsection{Semisimple length of a module} Let $\modN$ be a
$\LUresSL2$-module.  We define a \textit{semisimple filtration} of
$\modN$ as a tower of submodules
\begin{equation*}
  \modN=\modN_0\supset\modN_1\supset\ldots\supset\modN_l=0
\end{equation*}
such that each quotient $\modN_i/\modN_{i+1}$ is semisimple.  The
number $l$ is called the \textit{length} of the filtration.  In the
set of semisimple filtrations of $\modN$, there exists a filtration
with the minimum length~$\ell$.  We call~$\ell$ the \textit{semisimple
  length} of $\modN$.

Evidently, an indecomposable module has the semisimple length not less
than~$2$.  Any semisimple module has the semisimple length~$1$.

\begin{prop}\mbox{}

\begin{enumerate}
\item
Every indecomposable $\LUresSL2$-module with the semisimple length~$3$
is isomorphic to $\PP^{\pm}_{s,r}$\,, for some $s\in \{1, 2, \dots,
p-1\}$ and some finite $r\in\oN$.
\item
There are no indecomposable modules with the semisimple length greater
than $3$.
\end{enumerate}
\end{prop}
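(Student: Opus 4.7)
I handle Part~(2) first, then use it to drive Part~(1).

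\emph{Part~(2).} The semisimple length of any finite-dimensional module is bounded by the length of the radical filtration of the algebra acting on it, which in turn equals the maximum semisimple length among the indecomposable projective modules. By Lemma~\bref{lemma:exts}, the simples $\repX^{\pm}_{p,r}$ admit no nontrivial extensions with any other simple, so they are simultaneously projective and injective of length~$1$; the remaining simples $\repX^{\pm}_{s,r}$ with $1\le s\le p-1$ have projective covers $\PP^{\pm}_{s,r}$ of length exactly~$3$ by~\eqref{schem-proj} and~\eqref{schem-proj-sm}. Since every indecomposable projective is the projective cover of a unique simple, the modules just listed exhaust the indecomposable projectives, bounding every module's semisimple length by~$3$.

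\emph{Part~(1).} Let $\modN$ be indecomposable of semisimple length~$3$, with projective cover $\pi: P \twoheadrightarrow \modN$, and write $P = \bigoplus_i P_i$ as a sum of indecomposable projectives indexed by the simple summands of the head $\modN/\rad\modN$. If some $P_i \cong \repX^{\pm}_{p,r}$, it is both simple and projective, hence lifts to a direct summand of $\modN$; indecomposability then forces $\modN \cong \repX^{\pm}_{p,r}$ of length~$1$, contrary to the hypothesis. Thus each $P_i = \PP^{\epsilon_i}_{s_i,r_i}$ with $1\le s_i \le p-1$.

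Next I show $P$ is indecomposable. Suppose $P$ has $n\ge 2$ summands. By Lemma~\bref{lemma:exts}, two distinct summands $\PP^{\epsilon_1}_{s_1,r_1}$ and $\PP^{\epsilon_2}_{s_2,r_2}$ can share a simple radical-subquotient only if $\epsilon_1=\epsilon_2$, $s_1=s_2$, and $|r_1-r_2|\in\{0,2\}$. In each such configuration, a direct analysis using the explicit action on $\PP^{\pm}_{s,r}$ from the Appendix shows that any gluing $K=\ker\pi$ identifying common subquotients can be ``untwisted'' by an automorphism of $P$, exposing one of the $\PP^{\epsilon_i}_{s_i,r_i}$ (or a proper quotient thereof) as a direct summand of $\modN$ --- contradicting indecomposability. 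Hence $P = \PP^{\pm}_{s,r}$ for some $\pm$ and some $1\le s\le p-1$, and $\modN$ is a quotient of this projective. Since $\soc(\PP^{\pm}_{s,r}) = \repX^{\pm}_{s,r}$ is the unique simple submodule of $\PP^{\pm}_{s,r}$, every nonzero submodule contains the socle, so every proper quotient has semisimple length at most~$2$; length~$3$ then forces $\modN \cong \PP^{\pm}_{s,r}$.

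The main obstacle is the step showing $P$ indecomposable. It requires combining the first-extension rigidity of Lemma~\bref{lemma:exts} with the absence of higher-order obstructions --- essentially, extending the rigidity principle underlying Remark~\bref{rem:proj-param} from projective modules to arbitrary length-$3$ indecomposables. Once this is in hand, the remainder of the argument reduces to routine submodule bookkeeping inside the single projective $\PP^{\pm}_{s,r}$.
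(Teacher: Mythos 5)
Part~(2) of your argument is fine and is a slightly different, legitimate route from the paper's: you bound the semisimple length of any finite-dimensional module by the maximal Loewy length of the indecomposable projectives, which is~$3$ by the explicit structure of $\PP^{\pm}_{s,r}$. The problem is Part~(1). Its entire weight rests on the claim that the projective cover $P=\bigoplus_i\PP^{\epsilon_i}_{s_i,r_i}$ of an indecomposable length-$3$ module $\modN$ is itself indecomposable, and you do not prove this: the phrase ``a direct analysis using the explicit action \dots shows that any gluing can be untwisted by an automorphism of $P$'' is exactly the assertion that needs an argument, and you yourself flag it as the main obstacle. Note that the first-extension data of Lemma~\bref{lemma:exts} cannot by itself rule out a nontrivial gluing: for instance $\PP^{+}_{s,r}$ and $\PP^{+}_{s,r+2}$ genuinely share the composition factor $\XX^{-}_{p-s,r+1}$, so an indecomposable quotient of their direct sum is not excluded on $\mathrm{Ext}^1$ grounds alone. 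As written, the proof of Part~(1) is therefore incomplete.

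The gap is also avoidable, and the paper's proof shows how: the key fact is that each $\PP^{\pm}_{s,r}$ is \emph{injective} as well as projective (it is contragredient to a projective), and has simple socle $\repX^{\pm}_{s,r}$. Given an indecomposable $\modN$ of semisimple length~$3$, pick any simple constituent $\repX^{\pm}_{s,r}$ of its head and the induced map $\PP^{\pm}_{s,r}\to\modN$. Either this map is injective --- then injectivity of $\PP^{\pm}_{s,r}$ makes it a direct summand, and indecomposability forces $\modN\cong\PP^{\pm}_{s,r}$ --- or its kernel is nonzero, hence contains the simple socle, so the image has semisimple length at most~$2$ and lies in $\soc_2\modN$. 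If the second alternative held for every head constituent, then $\modN$ would equal the sum of these images and would have semisimple length at most~$2$, a contradiction; the same dichotomy shows no indecomposable of length~$4$ can exist. This one-summand-at-a-time use of injectivity completely bypasses the question of whether the full projective cover is indecomposable, which is the step your proposal leaves open.
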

\begin{proof}
Consider a module $\modM$ with a nonvanishing mapping
$\PP^{\pm}_{s,r}\to\modM$ that covers $\repX^{\pm}_{s,r}$. In the case
when the mapping is an embedding, we note that the projective module
$\PP^{\pm}_{s,r}$ is also an injective module (the contragredient one
to a projective module) and is therefore a direct summand in any
module into which it is embedded. In the case with a nonvanishing
kernel of the mapping, the kernel contains the submodule
$\repX^{\pm}_{s,r}$ of $\PP^{\pm}_{s,r}$. Therefore, the sub\-quotient
actually belongs to a direct summand in $\modM$ with the semisimple
length $2$~or~$1$.

We thus conclude that there are no indecomposable modules with the
semisimple length $4$ (``higher'' than $\PP^{\pm}_{s,r}$) and every
$\LUresSL2$-module with the semisimple length~$3$ is isomorphic to a
direct sum of $\PP^{\pm}_{s,r}$.
\end{proof}

\subsection{Decomposition of the category $\catC_p$}\label{sec:cat-decomp}
Here, we describe the category $\catC_p$ of finite-dimensional
$\oZ^2$-graded modules over $\LUresSL2$. We use the results about
possible extensions between irreducible modules over $\LUresSL2$,
Lem.~\bref{lemma:exts}, to state the following decomposition theorem.
\begin{thm}\label{thm:cat-decomp}\mbox{}
  \begin{enumerate}
  \item The category $\catC_p$ of finite-dimensional
  $\LUresSL2$-modules has the decomposition
    \begin{equation*}
      \catC_p=\bigoplus_{s=1}^{p-1}\catCpl(s)\oplus\catCmin(s)
      \oplus \bigoplus_{r\in\oN}\catSpl(r)\oplus\catSmin(r),
    \end{equation*}
    where each direct summand is a full indecomposable subcategory.
    
  \item Each of the full subcategories $\catSpl(r)$ and~$\catSmin(r)$ is
    semisimple and contains precisely one irreducible module,
    $\repX^{+}_{p,r}$ and~$\repX^{-}_{p,r}$ respectively.
    
  \item Each $\catCpl(s)$ contains the family of irreducible
    modules~$\repX^{+}_{s,2r-1}$
    and~$\repX^{-}_{p-s,2r}$\,,~$r{\in}\oN$.

  \item Each $\catCmin(s)$ contains the family of irreducible
    modules~$\repX^{+}_{s,2r}$
    and~$\repX^{-}_{p-s,2r-1}$\,,~$r{\in}\oN$.
  \end{enumerate}
\end{thm}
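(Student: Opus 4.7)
The plan is to realize $\catC_p$ as a direct sum of linkage blocks determined by the $\Ext^1$-quiver on irreducibles, with Lemma~\bref{lemma:exts} as the main input.

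First, I would introduce the equivalence relation $\sim$ on isomorphism classes of irreducible $\LUresSL2$-modules generated (as a symmetric transitive closure) by the rule $S\sim T$ whenever $\Ext(S,T)\neq 0$ or $\Ext(T,S)\neq 0$. For each equivalence class $\mathcal{L}$, let $\catC_p(\mathcal{L})$ denote the full subcategory of modules whose composition factors all lie in $\mathcal{L}$. The structural step is to prove $\catC_p=\bigoplus_{\mathcal{L}}\catC_p(\mathcal{L})$: by Krull--Schmidt every object of $\catC_p$ decomposes into indecomposables, so it suffices to check that each indecomposable has all composition factors in a single linkage class. This follows from the preceding classification: every indecomposable is either an irreducible, a length-two extension of the type allowed by Lemma~\bref{lemma:exts}, or a projective $\PP^{\pm}_{s,r}$ whose subquotient diagram~\eqref{schem-proj} or~\eqref{schem-proj-sm} displays all composition factors in a single linkage orbit.

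Next, I would read off the linkage classes explicitly. Lemma~\bref{lemma:exts} gives $\Ext(\repX^{\alpha}_{s,r},\repX^{\alpha'}_{s',r'})\neq 0$ precisely when $1\leq s\leq p-1$, $\alpha'=-\alpha$, $s'=p-s$, and $r'=r\pm 1$, and that $\repX^{\pm}_{p,r}$ admits no nontrivial extension with any irreducible. Hence each $\repX^{\pm}_{p,r}$ is a singleton linkage class; having no self-extensions either, the corresponding subcategory is semisimple with a single irreducible, giving $\catSpl(r)$ and $\catSmin(r)$ and settling~(2). For $1\leq s\leq p-1$, iterating the link $\repX^{\alpha}_{s,r}\leftrightarrow\repX^{-\alpha}_{p-s,r\pm 1}$ starting from $\repX^{+}_{s,1}$ sweeps out the orbit $\{\repX^{+}_{s,2k-1},\repX^{-}_{p-s,2k}\mid k\in\oN\}$, which yields~(3); starting from $\repX^{+}_{s,2}$ it produces $\{\repX^{+}_{s,2k},\repX^{-}_{p-s,2k-1}\mid k\in\oN\}$, which is~(4). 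Each move shifts $r$ by one and simultaneously flips $\alpha$, so the joint parity of $r$ and the sign $\alpha$ is conserved and the two orbits $\catCpl(s)$ and $\catCmin(s)$ are disjoint; the indexings by $s$ and by $p-s$ give distinct blocks because $\catCpl(s)$ contains $\repX^{+}_{s,1}$ while $\catCpl(p-s)$ contains $\repX^{+}_{p-s,1}$.

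The main obstacle is the block-decomposition step~(1), namely ruling out indecomposables whose composition factors straddle two linkage classes. In an arbitrary abelian category this would require a careful Ext-quiver argument showing that any nontrivial extension joining two blocks must split, but here it is essentially automatic: the earlier classification has already enumerated every indecomposable together with its composition factors, each visibly contained in one linkage orbit. Once the decomposition $\catC_p=\bigoplus_{\mathcal{L}}\catC_p(\mathcal{L})$ is secured, items~(2)--(4) reduce to reading off the orbits computed above.
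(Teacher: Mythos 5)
Your overall strategy --- blocks indexed by the linkage classes of the $\Ext^1$-quiver computed in Lemma~\bref{lemma:exts} --- is exactly what the paper intends: the theorem is stated there without a written proof, immediately after the remark that it is obtained from Lemma~\bref{lemma:exts}, and your orbit computation in items (2)--(4) (the conserved joint parity of $r$ and the sign $\alpha$, the singleton classes for $s=p$, the pairwise distinctness of the resulting blocks) is a correct and faithful expansion of that argument.

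The one genuine weak point is your justification of the block-decomposition step (1). You set aside the standard Ext-quiver argument on the grounds that ``the earlier classification has already enumerated every indecomposable together with its composition factors.'' It has not: the Proposition in Sec.~\bref{proj-mod} classifies only the indecomposables of semisimple length $3$ and rules out length $\geq 4$; it says nothing about indecomposables of semisimple length $2$, of which there are many with more than two composition factors (for instance the module $\modM^{\pm}_{s,r}$ used in the construction of $\PP^{\pm}_{s,r}$, the quotient of $\PP^{\pm}_{s,r}$ by its socle, and longer zigzag modules). So the premise of your shortcut is false, and your list ``irreducible, length-two extension, or projective'' does not exhaust the indecomposables. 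The conclusion you need --- that every indecomposable has all composition factors in a single linkage class --- is still true, but it must be proved by the argument you dismissed: induct on the length of $M$; choose a simple submodule $S$; decompose $M/S$ by the induction hypothesis into a part $N_1$ linked to $S$ and a part $N_2$ not linked to $S$; observe that $\Ext(N_2,S)=0$ because, by Lemma~\bref{lemma:exts} and the long exact sequence for $\Ext$, this group is built from $\Ext(T,S)$ over composition factors $T$ of $N_2$, all of which vanish; hence the preimage of $N_2$ splits off a copy of $N_2$, forcing $N_2=0$ if $M$ is indecomposable. With that standard lemma restored, the rest of your proof goes through and matches the paper.
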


We recall~\cite{[FGST2]} the decomposition of the representation
category $\catCbar_p$ for $\UresSL2$:
\begin{equation}\label{catbar-decomp}
\catCbar_p=\bigoplus_{s=0}^{p}\catCbar(s),
\end{equation}
where each $\catCbar(s)$ is a full indecomposable subcategory and
contains two irreducibles $\repX^+_s$ and $\repX^-_{p-s}$, for $1\leq
s\leq p-1$, that can be composed into an indecomposable module. The
subcategories $\catCbar(0)$ and $\catCbar(p)$ are semisimple and
contain $\repX^+_p$ and $\repX^-_p$ respectively.

\begin{lemma}\label{rem:indecomp-mod-rest}
An indecomposable $\LUresSL2$-module considered a
$\UresSL2$-module is a direct sum of indecomposable modules from the
full subcategory $\catCbar(s)$ in~\eqref{catbar-decomp} for some fixed
$s\in \{1, 2, \dots, p-1\}$.
\end{lemma}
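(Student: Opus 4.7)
The plan is to show that the simple composition factors of $M|_{\UresSL2}$ all lie in a single block $\catCbar(s)$ of~\eqref{catbar-decomp}. Since every $\UresSL2$-module admits a canonical block decomposition $N=\bigoplus_s N_s$ with $N_s\in\catCbar(s)$, and every indecomposable $\UresSL2$-module lies in a unique $\catCbar(s)$, this will force $M|_{\UresSL2}\in\catCbar(s)$ and thereby exhibit it as a direct sum of indecomposables from that block.

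First I would compute the restrictions of irreducibles. Reading off the action in \eqref{basis-lusz-irrep-1}--\eqref{basis-lusz-irrep-3}, for each fixed $m$ the vectors $\stprp^{\alpha}_{n,m}$, $0\le n\le s-1$, span a $\UresSL2$-submodule isomorphic to $\repX^{\alpha}_s$, and the $r$ such copies exhaust $\repX^{\alpha}_{s,r}$. Hence $\repX^{\alpha}_{s,r}|_{\UresSL2}\cong r\,\repX^{\alpha}_s$, which lies in $\catCbar(s)$ when $\alpha=+$ and in $\catCbar(p-s)$ when $\alpha=-$, using the block labeling recalled from~\cite{[FGST2]}.

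Next I would feed this into Lemma~\ref{lemma:exts}: the only nontrivial $\Ext$-groups between irreducible $\LUresSL2$-modules couple $\repX^{\alpha}_{s,r}$ with $\repX^{-\alpha}_{p-s,r\pm 1}$. Restricting to $\UresSL2$, both partners land in the same block, namely $\catCbar(s)$ if $\alpha=+$ and $\catCbar(p-s)$ if $\alpha=-$. Thus every nontrivial $\LUresSL2$-extension between irreducibles respects the $\UresSL2$-block decomposition. For an indecomposable $M$, the preceding proposition bounds its semisimple length by~$3$ and classifies $M$ as either irreducible, a projective $\PP^{\pm}_{s,r}$, or a length-$2$ non-split extension of semisimples. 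In each case indecomposability forces the composition factors to form a connected subset of the $\Ext$-quiver of simples, so by the previous step all such factors restrict into a single $\UresSL2$-block $\catCbar(s)$. Hence $M|_{\UresSL2}\in\catCbar(s)$ and the lemma follows.

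The main obstacle I expect lies in the connectedness step: one must argue that if the composition factors of $M$ split into two $\Ext$-disjoint subsets then $M$ splits as a corresponding direct sum. Since $M$ is finite-dimensional only finitely many simples appear, reducing us to a standard block argument; however, $\LUresSL2$ itself is infinite-dimensional, so one should phrase this via the finite-dimensional quotient of $\LUresSL2$ acting faithfully on $M$ (which is a finite-dimensional algebra whose blocks one may appeal to) rather than invoking block theory for $\LUresSL2$ directly.
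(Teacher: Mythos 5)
Your proposal is correct, and it supplies exactly the argument the paper leaves implicit: the lemma is stated without proof as a consequence of Lemma~\ref{lemma:exts}, and your steps --- computing $\repX^{\alpha}_{s,r}|_{\UresSL2}\cong r\,\repX^{\alpha}_{s}$ from \eqref{basis-lusz-irrep-1}--\eqref{basis-lusz-irrep-3}, noting that the only linked pairs $\repX^{\alpha}_{s,r}$, $\repX^{-\alpha}_{p-s,r\pm1}$ restrict into the same block $\catCbar(s)$, and applying the standard linkage/splitting argument for finite-length modules (correctly routed through a finite-dimensional quotient, though d\'evissage on $\mathrm{Ext}^1$ of finite-dimensional modules already suffices) --- are the intended reasoning. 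The only caveat is that the modules $\repX^{\pm}_{p,r}$ restrict into the semisimple blocks $\catCbar(0)$ and $\catCbar(p)$ rather than some $\catCbar(s)$ with $1\le s\le p-1$; this is an imprecision in the lemma's statement rather than a flaw in your proof.
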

In the following section, we show that the full subcategory
\begin{equation*}
\catCpl_p=\bigoplus_{s=1}^{p-1}\catCpl(s)
      \oplus \bigoplus_{\text{odd}\, r\geq 1}\catSpl(r)
 \oplus \bigoplus_{\text{even}\, r\geq 2}\catSmin(r)
\end{equation*}
in the category $\catC_p=\catCpl_p\oplus\catCmin_p$ is closed under
the tensor product operation and is identified with a tensor category
of representations for the Virasoro algebra $\Vir$. This gives fusion
rules for logarithmic $(1,p)$ models with the chiral symmetry $\Vir$.

\section{The fusion algebra\label{sec:fusion}}
In this section, we calculate tensor products between $\LUresSL2$ irreducible and
projective modules introduced in
Sec.~\bref{sec:LU-rep}. To decompose the tensor products, we first
show how to extend results for $\LUresSL2$-modules with the
$s\ell(2)$-index $r=1$ to modules with~$r>1$.

\begin{Lemma}\label{fus-lemma}
For $1\leq s\leq p$\, and $r\in\oN$, we have isomorphisms of
$\LUresSL2$-modules,
\begin{align*}
&\repX^+_{s,r} \cong \repX^{\pm}_{1,r}\tensor\repX^{\pm}_{s,1}
\cong\repX^{\pm}_{s,1}\tensor\repX^{\pm}_{1,r},
&\repX^-_{s,r} \cong
\repX^{\pm}_{1,r}\tensor\repX^{\mp}_{s,1}
\cong\repX^{\pm}_{s,1}\tensor\repX^{\mp}_{1,r}
\end{align*}
and
\begin{align*}
  &\PP^+_{s,r} \cong \XX^{\pm}_{1,r}\tensor\PP^{\pm}_{s,1} \cong
  \PP^{\pm}_{s,1}\tensor\XX^{\pm}_{1,r}, 
  &\PP^-_{s,r} \cong
  \XX^{\pm}_{1,r}\tensor\PP^{\mp}_{s,1} \cong \PP^{\pm}_{s,1}\tensor\XX^{\mp}_{1,r}.
\end{align*}
\end{Lemma}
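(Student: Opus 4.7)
The strategy is to prove the irreducible isomorphisms first by a direct coproduct calculation that exploits the ``decoupled'' structure of $\repX^{\pm}_{1,r}$ (on which $\UresSL2$ acts trivially up to a sign on $K$) and $\repX^{\pm}_{s,1}$ (on which the $s\ell(2)$-triple acts by zero), and then to deduce the projective case from the standard Hopf-algebra fact that tensoring a projective module with any finite-dimensional module again yields a projective, together with a dimension count.

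For the irreducible part I would first read off from~\eqref{basis-lusz-irrep-1}--\eqref{basis-lusz-irrep-3} that $E=F=0$ and $K=\pm\one$ on $\repX^{\pm}_{1,r}$, while $e=f=h=0$ on $\repX^{\pm}_{s,1}$. Substituting these relations into~\eqref{Uq-comult-relations},~\eqref{e-comult}, and~\eqref{f-comult}, every term of $\Delta(e)$ or $\Delta(f)$ carrying a power $E^{p-r}$ or $F^{s}$ with $s\geq1$ annihilates one of the two tensor components, so that on $\repX^{\alpha}_{1,r}\tensor\repX^{\beta}_{s,1}$ the operators $E$, $F$, $K$ act only on the right factor (with signs induced by $K$) while $e$, $f$, $h$ act only on the left factor. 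The basis map
\begin{equation*}
\stprp^{\alpha\beta}_{n,m}\;\longmapsto\;\varepsilon(n)\,\stprp^{\alpha}_{0,m}\tensor\stprp^{\beta}_{n,0},
\end{equation*}
with $\varepsilon(n)=1$ for $\alpha=+$ and $\varepsilon(n)=(-1)^{n}$ for $\alpha=-$ (chosen to absorb the $K^{-1}$ appearing in $\Delta(F)$), intertwines the $\LUresSL2$-actions, as one verifies against~\eqref{basis-lusz-irrep-1}--\eqref{basis-lusz-irrep-3}; this gives $\repX^{\alpha\beta}_{s,r}\cong\repX^{\alpha}_{1,r}\tensor\repX^{\beta}_{s,1}$, and the reversed ordering follows by the symmetric computation in which the signs enter through the $E\tensor K$ summand of $\Delta(E)$ instead.

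For the projective case I would invoke that for any finite-dimensional Hopf algebra $H$ and any finite-dimensional $H$-module $M$, the left $H$-module isomorphism $H\tensor M\cong H^{\dim M}$ (via $x\tensor m\mapsto\sum x_{(1)}\tensor x_{(2)}m$) implies that $M\tensor P$ is projective whenever $P$ is. Hence $\repX^{\alpha}_{1,r}\tensor\PP^{\beta}_{s,1}$ is projective and decomposes as a direct sum of indecomposable projective $\LUresSL2$-modules. Tensoring the surjection $\PP^{\beta}_{s,1}\surjection\repX^{\beta}_{s,1}$ with $\repX^{\alpha}_{1,r}$ and applying the irreducible isomorphism above produces a surjection onto $\repX^{\alpha\beta}_{s,r}$, so $\PP^{\alpha\beta}_{s,r}$ appears as a direct summand. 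The dimension identity
\begin{equation*}
\dim\bigl(\repX^{\alpha}_{1,r}\tensor\PP^{\beta}_{s,1}\bigr)=2pr=\dim\PP^{\alpha\beta}_{s,r},
\end{equation*}
read off from the composition series in~\eqref{schem-proj-sm} and~\eqref{schem-proj}, then rules out any further summand and yields $\repX^{\alpha}_{1,r}\tensor\PP^{\beta}_{s,1}\cong\PP^{\alpha\beta}_{s,r}$.

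The main obstacle is the sign bookkeeping in the basis-level isomorphism for the irreducible step; once this is settled, the projective step is essentially automatic, the only care point being verification of the $H\tensor M\cong H^{\dim M}$ identity for $H=\LUresSL2$, which follows from general Hopf-algebra theory in the finite-dimensional setting.
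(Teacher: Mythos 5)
Your proof is correct, but the projective half takes a genuinely different route from the paper. The paper's own proof is one explicit basis-level computation: it takes $\repX^{+}_{1,r}\tensor\PP^{+}_{s,1}$ as the worked case, notes that the comultiplication collapses there (every divided-power term in $\Delta(e)$, $\Delta(f)$ carries a positive power of $E$ or $F$ acting on $\repX^{+}_{1,r}$ and hence vanishes), and then writes down an explicit basis of the product --- including the combinations $\rightpr_{k,i}=\stprp_i\tensor\rightpr_{k,0}+i\,\stprp_{i-1}\tensor\rightpr_{k,1}$ and $\leftpr_{k,j}=\stprp_j\tensor\rightpr_{k,0}-(r-j)\,\stprp_{j-1}\tensor\rightpr_{k,1}$ --- on which all generators act exactly as in the App.~\bref{app:proj-mod-base} description of $\PP^{+}_{s,r}$; the remaining cases are declared similar. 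Your irreducible step is the same kind of computation (and your sign $\varepsilon(n)=\alpha^{n}$ is precisely what the $K^{-1}\otimes F$ term forces), but for projectives you substitute a soft argument: tensoring with a projective module yields a projective, the induced surjection onto $\repX^{\alpha\beta}_{s,r}$ splits off the projective cover $\PP^{\alpha\beta}_{s,r}$, and the count $\dim=2pr$ on both sides leaves no room for further summands. This spares you the $\leftpr/\rightpr$ bookkeeping, at the cost of not producing the explicit isomorphism, and it is consistent with the paper, which invokes the same projectivity-of-tensor-products principle in the following subsection anyway. Two small cautions: $\LUresSL2$ is \emph{not} finite-dimensional (it has $U(s\ell(2))$ as a quotient), so you cannot literally cite the free-module isomorphism $H\tensor M\cong H^{\dim M}$; what you actually need is exactness of $X\mapsto\mathrm{Hom}(M\tensor P,X)\cong\mathrm{Hom}(P,M^{*}\tensor X)$ in the rigid category $\catC_p$, which holds and gives the same conclusion. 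And for $s=p$ the projective statement is not a separate case, since $\PP^{\pm}_{p,r}=\XX^{\pm}_{p,r}$ and your irreducible computation already covers it.
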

\begin{proof}
As an example, we consider $\repX^{+}_{1,r}\tensor\PP^{+}_{s,1}$.
In this case, the comultiplication takes the following form:
\begin{align*}
  &\Delta(E)= \one\otimes E,\quad
  \Delta(F)= K^{-1}\otimes F,\quad
  \Delta(K)= K\otimes K,\\
  &\Delta(e)= K^p\tensor e+e\tensor 1,\quad
  \Delta(f)= K^p\tensor f+f\tensor 1,\quad
  \Delta(h)= 1\tensor h+h\tensor 1.
\end{align*}

Let $\stprp_m$, with $0\leq m\leq r-1$, denotes the basis in
\eqref{basis-lusz-irrep-1} - \eqref{basis-lusz-irrep-3} for
$\repX^{+}_{1,r}$ and $\toppr_n$, $\botpr_n$, $\rightpr_{k,0}$,
$\rightpr_{k,1}$, with $0\leq n\leq s-1$ and $0\leq k\leq p-s-1$, is
the basis in~App.\bref{app:proj-mod-base} for $\PP^{\pm}_{s,1}$. Then,
the basis for the product is
\begin{align*}
&\toppr_{n,m} = \stprp_m\tensor\toppr_n, &\botpr_{n,m} =
  \stprp_m\tensor\botpr_n,& \mbox{}\\
&\rightpr_{k,i} = \stprp_i\tensor\rightpr_{k,0}+i\stprp_{i-1}\tensor\rightpr_{k,1},
&\leftpr_{k,j} = \stprp_j\tensor\rightpr_{k,0}-(r-j)\stprp_{j-1}\tensor\rightpr_{k,1},&
\end{align*}
where $0 \le i \le r$, $1 \le j \le r-1$ and we assume
$\stprp_r\equiv0$. The action of all the generators coincides with the one
in $\PP^+_{s,r}$ explicitly described in App.~\bref{app:proj-mod-base}.
The other cases have a similar proof.
\end{proof}

Then, we calculate tensor products starting with simplest
cases. Tensor products of modules with arbitrary $r$-indexes are based
on their $r=1$ cases and tensor products of a projective module with
an irreducible or a projective module are based on tensor products
of their irreducible subquotients and submodules.

\subsection{Fusion of irreducible modules}
The case of tensor products of two irreducibles with the
$s\ell(2)$-index $r=1$ is the simplest one.
\begin{lemma}
For $1\le s_1,s_2\le p-1$, we have
\begin{equation*}
\repX^{\alpha}_{s_1,1}\tensor\repX^{\beta}_{s_2,1} = 
\bigoplus_{\substack{s=|s_1-s_2|+1\\\step=2}}^{\min(s_1+s_2-1,2p - s_1 -
  s_2 - 1)}
\repX^{\alpha \beta}_{s,1}+\bigoplus_{\substack{s=2p - s_1 - s_2
    +1\\\step=2}}^{p-\gamma_2}
\mathscr{P}^{\alpha \beta}_{s,1},\\
\end{equation*}
where $\gamma_2=(s_1+s_2+p+1)\mod2$.
\end{lemma}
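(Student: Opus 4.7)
The plan is to reduce the identity to the known tensor-product decomposition of $\UresSL2$-modules and then upgrade each $\UresSL2$-summand to a specific $\LUresSL2$-module. Since $r_1=r_2=1$, the generators $e,f$ act as zero on each factor, so restricting to $\UresSL2\subset\LUresSL2$ leaves only the $\UresSL2$-tensor product $\repX^{\alpha}_{s_1}\otimes\repX^{\beta}_{s_2}$, for which the formula of \cite{[FGST2]} decomposes it exactly as the restriction of the claimed right-hand side, using $\repX^{\alpha\beta}_{s,1}|_{\UresSL2}=\repX^{\alpha\beta}_{s}$ and the fact that $\PP^{\alpha\beta}_{s,1}|_{\UresSL2}$ is the $\UresSL2$-projective cover of $\repX^{\alpha\beta}_s$ (with subquotient structure $\repX^{\alpha\beta}_s$ on top, $2\,\repX^{-\alpha\beta}_{p-s}$ in the middle, and $\repX^{\alpha\beta}_s$ on the bottom). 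Thus both sides agree as $\UresSL2$-modules, and only the $s\ell(2)$-action must be identified.

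For the irreducible summands in the first sum, each $\repX^{\alpha\beta}_s$ appears with $\UresSL2$-multiplicity one, and $e,f$ commute with $K$ by \eqref{zero-rel}, so they preserve every one-dimensional $K$-weight subspace. Using \eqref{e-comult} and \eqref{f-comult}, every term of $\Delta(e)$ or $\Delta(f)$ either contains $e$ or $f$ applied to a factor, which vanishes, or produces vectors with $K$-weights lying outside the $\repX^{\alpha\beta}_s$-component (this is automatic for $s\le 2p-s_1-s_2-1$). Thus the $s\ell(2)$-action is trivial and the summand is $\repX^{\alpha\beta}_{s,1}$. For the projective summands in the second sum, each $\UresSL2$-projective-cover summand appears with multiplicity one, and its middle layer is $2\,\repX^{-\alpha\beta}_{p-s}$. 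By Remark~\bref{rem:proj-param}, an indecomposable $\LUresSL2$-module with the subquotient structure of $\PP^{\alpha\beta}_{s,1}$ is unique; the $s\ell(2)$-action on the middle layer therefore either organises the pair into the $s\ell(2)$-doublet $\repX^{-\alpha\beta}_{p-s,2}$, giving $\PP^{\alpha\beta}_{s,1}$, or acts trivially, which would force the summand to decompose.

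Ruling out the trivial option is the main obstacle. It suffices to exhibit a single middle-layer vector whose image under $\Delta(e)$ has a nonzero component in the other copy of $\repX^{-\alpha\beta}_{p-s}$. Expanding in the $\stprp^{\pm}_{n,m}$-basis of \eqref{basis-lusz-irrep-1}--\eqref{basis-lusz-irrep-3}, one picks a PBW basis for $\repX^{\alpha}_{s_1,1}\otimes\repX^{\beta}_{s_2,1}$, identifies a vector sitting in the projective block, and checks that \eqref{e-comult} yields a nonzero coefficient of the form $\q^{r(p-r)}/([r]\,[p-1]!)$ for the unique admissible $r$ with $p-r\le s_1-1$ and $r\le s_2-1$, a constraint satisfiable only when $s_1+s_2\ge p+2$, precisely when projective summands appear. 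The remaining bookkeeping over signs $\alpha,\beta$, quantum integers, and matching of index ranges is then routine.
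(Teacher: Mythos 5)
Your first step---restricting to $\UresSL2$ and invoking the known decomposition of $\repX^{\alpha}_{s_1}\tensor\repX^{\beta}_{s_2}$---is the same as the paper's, but the passage from the $\UresSL2$-decomposition to the $\LUresSL2$-decomposition has a genuine gap. You analyse the $s\ell(2)$-action summand by summand, which presupposes that each $\UresSL2$-direct summand is stable under $\Delta(e)$, $\Delta(f)$, $\Delta(h)$; this is not automatic, since $e,f,h$ do not commute with $E$ and $F$. Your substitute justification via $K$-weights fails: because $[K,e]=[K,f]=0$, the operators $\Delta(e)$ and $\Delta(f)$ \emph{preserve} every $K$-eigenspace of the tensor product, so no weight count can show that their image ``lies outside'' a given summand, and in any case those eigenspaces are not one-dimensional (the summands $\repX^{\alpha\beta}_{s}$ and $\repX^{\alpha\beta}_{s+2}$ share $K$-eigenvalues). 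The missing ingredient is the block argument the paper uses: by Lemma~\bref{rem:indecomp-mod-rest}, an indecomposable $\LUresSL2$-module restricts to $\UresSL2$-indecomposables from a single subcategory $\catCbar(s)$, and since the $\UresSL2$-decomposition here contains exactly one indecomposable from each occurring block, no two summands can merge into a larger $\LUresSL2$-indecomposable and each summand is forced to carry its own $\LUresSL2$-structure. Without this (or an equivalent statement) the component-by-component analysis is unfounded.

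The second gap is that the decisive step for the projective summands is not actually performed. You rightly see that one must exclude a trivial $s\ell(2)$-action on the middle layer, but the reason you give (``which would force the summand to decompose'') is not self-evident---a trivial $s\ell(2)$-action on an $\UresSL2$-indecomposable space would not by itself split it---and the explicit $\Delta(e)$-computation you then propose is only described, not carried out; that computation is precisely the content that would need checking. It can in fact be avoided: either observe that relation~\eqref{Fe-rel} forces $e$ to act nontrivially on any $\LUresSL2$-module on which the right-hand side of~\eqref{Fe-rel} is a nonzero operator (as it is on $\PP^{\alpha\beta}_{s,1}|_{\UresSL2}$), or apply Lemma~\bref{lemma:exts}: if every composition factor had $s\ell(2)$-index $r=1$, all first extension groups between them would vanish, the module would be $\LUresSL2$-semisimple and hence $\UresSL2$-semisimple, contradicting the indecomposability of the restricted projective cover. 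Hence the middle layer must be the doublet $\repX^{-\alpha\beta}_{p-s,2}$ and the summand is $\PP^{\alpha\beta}_{s,1}$.
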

\begin{proof}
The tensor product restricted to the subalgebra $\UresSL2$ has the
well known decomposition~\cite{[FGST],Semikh-rew} (see also~\cite{Erd}
for more general case of Taft Hopf algebras). In this decomposition,
each two direct summands belongs to pairwise different indecomposable
subcategories in the category $\catCbar_p$ (see~\eqref{catbar-decomp})
and according to Lem.~\bref{rem:indecomp-mod-rest} any direct sum of
them can not be combined into an indecomposable
$\LUresSL2$-module. Therefore, the action of the $e$, $f$, and $h$
generators is unambiguously defined.
\end{proof}
Using Lem.~\bref{fus-lemma}, we can easily extend this result to
arbitrary $r$-index.
\begin{thm}\label{fus-irr}
For $1\le s_1\le p-1$, $r_1,r_2\in\oN$, we have
\begin{equation*}
  \repX^{\alpha}_{s_1,r_1}\tensor\repX^{\beta}_{s_2,r_2} =
  \bigoplus_{\substack{r=|r_1-r_2|+1\\\step=2}}^{r_1+r_2-1}
  \Bigr(\bigoplus_{\substack{s=|s_1-s_2|+1\\\step=2}}^{\min(s_1+s_2-1,2p -
  s_1 - s_2 - 1)}\repX^{\alpha \beta}_{s,r}+\bigoplus_{\substack{s=2p - s_1
  - s_2 +1\\\step=2}}^{p-\gamma_2}\mathscr{P}^{\alpha
  \beta}_{s,r}\Bigl),\\
\end{equation*}
where $\gamma_2=(s_1+s_2+p+1)\mod2$.
\end{thm}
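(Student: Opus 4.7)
The plan is to reduce the general statement to the $r_1=r_2=1$ case handled by the preceding lemma, using Lem.~\bref{fus-lemma} to separate the ``quantum'' index $s$ from the ``classical'' $s\ell(2)$-index $r$. Schematically, each irreducible $\repX^{\alpha}_{s,r}$ factors as a tensor product of a pure spin module $\repX^{+}_{1,r}$ (on which $E$, $F$, $K$ act trivially) and a pure quantum module $\repX^{\alpha}_{s,1}$ (on which $e$, $f$, $h$ act trivially), and these two factor types commute up to isomorphism.

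Concretely, first I would write, by Lem.~\bref{fus-lemma},
\[
\repX^{\alpha}_{s_1,r_1}\tensor \repX^{\beta}_{s_2,r_2}
\cong \bigl(\repX^{+}_{1,r_1}\tensor \repX^{\alpha}_{s_1,1}\bigr)
      \tensor \bigl(\repX^{+}_{1,r_2}\tensor \repX^{\beta}_{s_2,1}\bigr),
\]
and then rearrange by associativity together with a second application of Lem.~\bref{fus-lemma} (the commutativity isomorphism $\repX^{\alpha}_{s,1}\tensor \repX^{+}_{1,r}\cong \repX^{+}_{1,r}\tensor \repX^{\alpha}_{s,1}$) into
\[
\bigl(\repX^{+}_{1,r_1}\tensor \repX^{+}_{1,r_2}\bigr)
\tensor \bigl(\repX^{\alpha}_{s_1,1}\tensor \repX^{\beta}_{s_2,1}\bigr).
\]
The first factor is a tensor product of classical $s\ell(2)$-modules (the $E$, $F$, $K$ generators acting trivially on both), so the standard Clebsch--Gordan decomposition yields $\bigoplus_{r=|r_1-r_2|+1,\ \step=2}^{r_1+r_2-1}\repX^{+}_{1,r}$. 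The second factor is decomposed by the lemma immediately preceding Thm.~\bref{fus-irr}. Finally, distributing the direct sums and applying Lem.~\bref{fus-lemma} once more in reverse (now to glue the pure spin $\repX^{+}_{1,r}$ to each irreducible or projective $s$-factor) converts each $\repX^{+}_{1,r}\tensor \repX^{\alpha\beta}_{s,1}$ into $\repX^{\alpha\beta}_{s,r}$ and each $\repX^{+}_{1,r}\tensor \PP^{\alpha\beta}_{s,1}$ into $\PP^{\alpha\beta}_{s,r}$, reproducing the claimed decomposition.

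The main thing to verify is that the commutativity isomorphism invoked in the rearrangement is legitimate despite the comultiplications \eqref{e-comult} and \eqref{f-comult} being decidedly non-cocommutative: Lem.~\bref{fus-lemma} guarantees this as an abstract isomorphism of $\LUresSL2$-modules, so no braided structure needs to be invoked. A minor bookkeeping item is checking that the sign combinations $\alpha\beta$ come out consistently (since Lem.~\bref{fus-lemma} allows either a $+$ or a $-$ spin factor to be pulled out, with a compensating sign on the neighboring quantum factor); uniformly choosing the $+$ variant throughout makes this automatic. Once the factorization is set up the proof collapses into a short calculation, so I do not anticipate any substantive obstacle beyond this sign bookkeeping.
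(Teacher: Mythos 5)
Your proposal is correct and follows essentially the same route as the paper: the paper's proof likewise reduces everything via Lem.~\bref{fus-lemma} to the $r=1$ lemma together with the single Clebsch--Gordan computation for $\repX^{+}_{1,r_1}\tensor\repX^{+}_{1,r_2}$, which it verifies by noting that the comultiplication restricted to that product is the classical $s\ell(2)$ one. Your write-up merely makes the associativity/rearrangement bookkeeping explicit where the paper leaves it implicit.
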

\begin{proof}
We only need to show that\, $\XX^+_{1,r_1}\tensor\XX^+_{1,r_2}
=\bigoplus_{\substack{r=|r_1-r_2|+1\\step=2}}^{r_1+r_2-1}
\XX^+_{1,r}.$ This trivially follows from the comultiplication
restricted on this tensor product,
\begin{align*}
  &\Delta(E)= 0,\quad
  \Delta(F)= 0,\quad
  \Delta(K)= 1\otimes 1,\\
  &\Delta(e)= 1\tensor e+e\tensor 1,\quad
  \Delta(f)= 1\tensor f+f\tensor 1,\quad
  \Delta(h)= 1\tensor h+h\tensor 1,
\end{align*}
which coincides with the usual comultiplication for the $s\ell(2)$.
\end{proof}

\subsection{Fusion with projectives}
The above results allow us to decompose tensor products of irreducible
modules with projective ones and of two projective modules. We note
that in both cases the tensor product must contain only projectives.

To decompose the tensor product of an irreducible module and a
projective one, we start with the $r=1$ case as well. We consider
irreducible subquotients and submodules of the projective module and
calculate their tensor products with the irreducible
module. Projectives obtained from these tensor products are direct
summands because any projective $\LUresSL2$-module is also injective
(the contragredient one to a projective module) and is therefore a
direct summand in any module into which it is embedded. Irreducibles
obtained from the tensor products are subquotients of projective
modules in the whole tensor product. This procedure thus gives a
decomposition of the tensor product of an irreducible module with a
projective module.

\begin{lemma}\label{lem-fus-proj}
For $1\le s_1\le p-1$, we have
\begin{equation}\label{eq-fus-proj}
  \repX^{\alpha}_{s_1,1}\tensor\mathscr{P}^{\beta}_{s_2,1} =
  \bigoplus_{\substack{s=|s_1-s_2|+1\\\step=2}}^{\substack{\min(s_1+s_2-1,\\2p-s_1-s_2-1)}}
  \mathscr{P}^{\alpha\beta}_{s,1} +
  2\bigoplus_{\substack{s=2p-s_1-s_2+1\\\step=2}}^{p-\gamma_2}
  \mathscr{P}^{\alpha\beta}_{s,1}
  +\bigoplus_{\substack{s=p-s_1+s_2+1\\\step=2}}^{p-\gamma_1}\mathscr{P}^{-\alpha
  \beta}_{s,2},\\
\end{equation}
where $\gamma_1=(s_1+s_2+1)\mod2$, and $\gamma_2=(s_1+s_2+p+1)\mod2$.
\end{lemma}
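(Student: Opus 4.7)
The plan is to follow the recipe stated at the start of \bref{sec:fusion}: list the composition factors of $\mathscr{P}^{\beta}_{s_2,1}$, tensor each with $\repX^{\alpha}_{s_1,1}$ using Theorem~\ref{fus-irr}, split off any resulting projectives, and glue the surviving irreducibles into indecomposables.

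Diagram~\eqref{schem-proj-sm} presents $\mathscr{P}^{\beta}_{s_2,1}$ as a length-$3$ module with successive factors $\repX^{\beta}_{s_2,1}$ (head), $\repX^{-\beta}_{p-s_2,2}$ (middle), $\repX^{\beta}_{s_2,1}$ (socle). Since tensoring is exact, in the Grothendieck group
\begin{equation*}
[\repX^{\alpha}_{s_1,1}\tensor\mathscr{P}^{\beta}_{s_2,1}] = 2\,[\repX^{\alpha}_{s_1,1}\tensor\repX^{\beta}_{s_2,1}] + [\repX^{\alpha}_{s_1,1}\tensor\repX^{-\beta}_{p-s_2,2}],
\end{equation*}
and both classes on the right are evaluated by Theorem~\ref{fus-irr}. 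Since every projective $\LUresSL{2}$-module is also injective (the contragredient of a projective is projective), any projective appearing in one of the subquotient tensor products lifts to a direct summand of the whole product: the head/socle copies each furnish $\bigoplus^{p-\gamma_2}_{s=2p-s_1-s_2+1,\step=2}\mathscr{P}^{\alpha\beta}_{s,1}$, producing the factor $2$ in the second sum of~\eqref{eq-fus-proj}, and the middle irreducible furnishes $\bigoplus^{p-\gamma_1}_{s=p-s_1+s_2+1,\step=2}\mathscr{P}^{-\alpha\beta}_{s,2}$, producing the third sum.

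After removing these projective summands, the residual irreducible composition factors, for each $s$ with $|s_1-s_2|+1 \le s \le \min(s_1+s_2-1,2p-s_1-s_2-1)$ of step~$2$, are two copies of $\repX^{\alpha\beta}_{s,1}$ (one each from the head and socle contributions) together with one copy of $\repX^{-\alpha\beta}_{p-s,2}$ from the middle irreducible contribution at index $p-s$. The matching of these ranges follows from Theorem~\ref{fus-irr} applied to $\repX^{\alpha}_{s_1,1}\tensor\repX^{-\beta}_{p-s_2,2}$ under the substitutions $\beta\mapsto-\beta,\ s_2\mapsto p-s_2,\ r_2\mapsto 2$. These are precisely the composition factors of a single $\mathscr{P}^{\alpha\beta}_{s,1}$ by~\eqref{schem-proj-sm}, and Lemma~\ref{lemma:exts} restricts admissible gluings to the pattern of~\eqref{schem-proj-sm}, so they assemble into one copy of $\mathscr{P}^{\alpha\beta}_{s,1}$, producing the first sum of~\eqref{eq-fus-proj}.

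The main obstacle is this last reassembly step: ruling out alternative indecomposable structures (for instance a length-$2$ extension plus a surviving irreducible) with the same composition multiplicities. The injectivity argument excludes surviving irreducibles, since any such irreducible direct summand $\repX^{\alpha\beta}_{s,1}$ would embed into its injective hull $\mathscr{P}^{\alpha\beta}_{s,1}$ and force that hull to split off, introducing composition factors absent from the Grothendieck class; Lemma~\ref{lemma:exts} then limits the available gluings to those realizing~\eqref{schem-proj-sm}. A dimension comparison $\dim(\repX^{\alpha}_{s_1,1}\tensor\mathscr{P}^{\beta}_{s_2,1})=2p\,s_1$ with the right-hand side of~\eqref{eq-fus-proj} pins down the decomposition.
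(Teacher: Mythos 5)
Your overall strategy matches the paper's: split $\mathscr{P}^{\beta}_{s_2,1}$ into its three irreducible constituents, tensor each with $\repX^{\alpha}_{s_1,1}$ via Thm.~\bref{fus-irr}, split off the projectives so obtained by the projective/injective lifting argument, and reassemble the residual irreducible factors into copies of $\mathscr{P}^{\alpha\beta}_{s,1}$. The Grothendieck-group bookkeeping and the identification of the second and third sums in \eqref{eq-fus-proj} are essentially what the paper does.

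However, there is a genuine gap in your final reassembly step, and the argument you offer to close it does not work. The paper's justification (stated in the paragraph preceding the lemma) is that the tensor product of any module with a projective module is itself projective, so \emph{every} indecomposable summand of $\repX^{\alpha}_{s_1,1}\tensor\mathscr{P}^{\beta}_{s_2,1}$ is an indecomposable projective, and the decomposition is then forced by the Grothendieck class. You never invoke this fact. In its place you claim that an irreducible direct summand ``would embed into its injective hull and force that hull to split off''; this is false --- an injective hull of a summand of $N$ need not be a summand of $N$ (take $N$ equal to the irreducible itself). Moreover, Lemma~\bref{lemma:exts} does not exclude the competing configuration: it asserts that nontrivial length-$2$ extensions of $\XX^{-\alpha\beta}_{p-s,2}$ by $\XX^{\alpha\beta}_{s,1}$ \emph{do} exist, and such an extension direct-summed with a free copy of $\XX^{\alpha\beta}_{s,1}$ has exactly the same composition factors and the same dimension as $\mathscr{P}^{\alpha\beta}_{s,1}$, so your dimension comparison cannot distinguish the two. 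To repair the proof you need the standard Hopf-algebra fact that $M\tensor P$ is projective for $P$ projective (equivalently, the paper's remark that ``the tensor product must contain only projectives''); once that is in hand, the linear independence of the classes of the indecomposable projectives in the Grothendieck group pins down \eqref{eq-fus-proj} exactly as you intend.
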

\begin{proof}
The projective module $\mathscr{P}^{\beta}_{s_2,1}$ contains three
irreducible subquotients. Let denote them according to their position
in the diagram \eqref{schem-proj-sm}, for simplicity. The top
${\XX^{\beta}_{s_2,1}}$ is marked as $T$, the bottom as $B$, and the
middle ${\XX^{-\beta}_{p-s_2,2}}$ as $M$. The tensor product of the
irreducible module $\repX^{\alpha}_{s_1,1}$ with each of them contains
irreducible terms which can uniquely combain into a projective
module. These projectives are the first sum in the right-hand side of
\eqref{eq-fus-proj}. The tensor product $\repX^{\alpha}_{s_1,1}\otimes
M$ for $s_1>s_2$ gives the
$\bigoplus_{s=p-s_1+s_2+1}^{p-\gamma_1}\mathscr{P}^{-\alpha \beta}_{s,2}$
terms in \eqref{eq-fus-proj}. The tensor products
$\repX^{\alpha}_{s_1,1}\otimes T$ and $\repX^{\alpha}_{s_1,1}\otimes
B$ give
$\bigoplus_{s=2p-s_1-s_2+1}^{p-\gamma_2}\mathscr{P}^{\alpha\beta}_{s,1}$
terms in \eqref{eq-fus-proj}, when $s_1>p-s_2$.
\end{proof}

We also decompose tensor products of two projective modules for the
$r=1$ case analogously as in Lem.~\bref{lem-fus-proj}.
\begin{multline*}
\PP^{\alpha}_{s_1,1}\tensor \PP^{\beta}_{s_2,1}
= 2\bigoplus_{\substack{s=|s_1-s_2|+1\\\text{step}=2}}^{\substack{
\min(s_1 + s_2 - 1,\\ 2p - s_1 - s_2 - 1)}}\PP^{\alpha\beta}_{s,1}
+\bigoplus_{\substack{s=|p-s_1-s_2|+1\\\text{step}=2}}^{\substack{
\min(p-s_1 + s_2 - 1,\\ p + s_1 - s_2 - 1)}}\PP^{-\alpha\beta}_{s,2}
+2\!\!\!\bigoplus_{\substack{s=\min(p-s_1 + s_2 + 1,\\ p + s_1 - s_2 + 1),\;\text{step}=2}}^
{p-\gamma_1}\!\!\!\!\!\!\PP^{-\alpha\beta}_{s,2}\\
+4\bigoplus_{\substack{s=2p-s_1 - s_2 + 1\\\text{step}=2}}^{p-\gamma_2}\PP^{\alpha\beta}_{s,1}
+\bigoplus_{\substack{s=s_1 + s_2 + 1\\\text{step}=2}}^{p-\gamma_2}\bigl(\PP^{\alpha\beta}_{s,1} +
\PP^{\alpha\beta}_{s,3}\bigr).
\end{multline*}

Using Lem.~\bref{fus-lemma}, we now extend these results for arbitrary
$r$-index and obtain the final results in Thm.~\bref{thm-main}.

\subsection{Relations to Virasoro fusion algebra}
In (\ref{identification}), we identify $\LUresSL2$ irreducible
and projective modules with irreducible and logarithmic modules
of the Virasoro algebra $\Vir$. Under the identification,
the tensor products of $\LUresSL2$-modules coincide with the fusion
of the corresponding {[GaberdielKausch]}-modules. In other words, there exists 
a tensor functor from the category $\catC^+_p$ to the category
of $\Vir$-modules with dimension of $L_0$ Jordan cells not greater than 2. 
The functor establishes
a one to one correspondence between simple objects of two categories
but is not an equivalence because the Virasoro category contains
more morphisms between simple objects and more indecomposable
objects than~$\catC^+_p$. In particular, Virasoro Verma modules have no
counterpart on the quantum group side. $\Vir$ also admits a class of modules
with two dimensional $L_0$ Jordan cells enumerated by a projective parameter.
All these modules have the same subquotient structure~(\ref{schem-proj})
nevertheless are parawise different and only module with a special value 
of the parameter has a counterpart on the quantum group side 
(see Rem.~\ref{rem:proj-param}). The details of the correspondence between $\Vir$
and $\LUresSL2$ indecomposable modules will be written in the future paper~\cite{[BFGT]}.




\section{Conclusions}
In the paper, we identified the fusion~\cite{RPfus} of the $\LM(1,p)$ logarithmic
models with the tensor products of $\LUresSL2$ irreducible and projective
modules. The Virasoro chiral algebra $\Vir$ of $\LM(1,p)$ and $\LUresSL2$
centralizes each other in the free-field space of states. This 
suggests that $\Vir$ and $\LUresSL2$ should be in the Kazhdan--Lusztig duality.
However, this duality is more subtle than duality between
$\algW$ and $\UresSL2$ in~\cite{[FGST]}. In the $\Vir$--$\LUresSL2$
case there is no equivalence between category $\catC^+_p$ and
naive category of $\Vir$ representations. An identification
of a relevant $\Vir$ category is an important future problem.

The quantum group $\LUresSL2$ is the maximal centralizer of the
Virasoro algebra $\Vir$ on the full chiral space of states. The two
commuting actions on the space of states are combined into a bimodule
over the Virasoro algebra and the quantum group. This bimodule is
expected to be closely related to the regular bimodule for the
corresponding quantum group, which will be explicitly constructed in
our future paper. The bimodule is resemble the limiting
bimodule~\cite{RS-Q}
that corresponds to the vacuum sector in the XXZ model chiral space 
of states.

In view of the proposed duality between the Virasoro algebra $\Vir$
and the quantum group $\LUresSL2$, there is a correspondence between
representation (sub)categories of the quantum group and the Virasoro
algebra. This allows constructing infinite series of indecomposable
representations for the Virasoro algebra (with the well-known
Feigin--Fuchs modules included) and the Felder resolutions in
quantum-group terms. 

The fusion algebra for the irreducible and indecomposable Virasoro
representations was calculated using the comultiplication in the
quantum group $\LUresSL2$. These results are in good correspondence
with the recent results of Read and Saleur~\cite{RS-Q} based on a
detailed study of quantum XXZ spin chains with the anisotropy
parameter related to the deformation parameter of the quantum group
being a primitive root of unity. This opens a possibility to investigate
XXZ spin chains in terms of $\LUresSL2$. In particular, coefficients
in front of $z$ in characters of multiplicity spaces calculated in~\cite{[FT]}
are related to multiplicities of $\LUresSL2$ projective modules in tensor
products of it's irreducibles and conjecturally give characters of 
Temperley--Lieb algebra modules realized in XXZ spin chains.

\subsubsection*{Acknowledgments} 
We are grateful to M.~Flohr, P.~Pearce, J.~Rasmussen, H.~Saleur 
and A.M.~Semikhatov for valuable
discussions and careful reading of the text.  
The work of BLF was supported in part by RFBR Grant
08-01-00720, RFBR-CNRS-07-01-92214 and LSS-3472.2008.2.  The work of
AMG was supported in part by RFBR Grant~07-01-00523, by the
Grant~LSS-1615.2008.2, and by the ``Landau'', ``Dynasty'' and
``Science Support'' foundations. The work of PVB was supported in part
by RFBR Grant~07-01-00523. The work of IYuT was supported in part by
LSS-1615.2008.2, the RFBR Grant 08-02-01118 and the ``Dynasty''
foundation.

\appendix

\section{Central idempotents}\label{app:idem}
We recall the central idempotents in $\UresSL2$~\cite{[FGST]}:
\begin{multline*}
  \idem_s=\ffrac{1}{\psi_s(\beta_s)}\bigl(
  \psi_s(\cas)-\frac{\psi_s^\prime(\beta_s)}{\psi_s(\beta_s)}(\cas
   -\beta_s)\psi_s(\cas)\bigr),\quad
  1\leq s\leq p-1,\\
  \idem_0=\ffrac{1}{\psi_0(\beta_0)}\psi_0(\cas),\qquad
  \idem_p=\ffrac{1}{\psi_p(\beta_p)}\psi_p(\cas),
\end{multline*}
with the polynomials
 \begin{multline*}
   \psi_s(x)=(x-\beta_0)\,(x-\beta_p)
   \smash{\prod_{\substack{j=1\\
         j\neq s}}^{p-1}}(x-\beta_j)^2,\quad 1\leq s\leq p-1,\\
      \psi_0(x)=(x-\beta_p)\prod_{j=1}^{p-1}(x-\beta_j)^2, \quad 
      \psi_p(x)=(x-\beta_0)\prod_{j=1}^{p-1}(x-\beta_j)^2,
\end{multline*}
where $\beta_j=\q^j+\q^{-j}$, 
and the Casimir element
\begin{equation*}
  \cas=(\q-\q^{-1})^2 EF+\q^{-1}K+\q K^{-1}=(\q-\q^{-1})^2 FE+\q K+\q^{-1}K^{-1}.
\end{equation*}

\section{Projective $\LUresSL2$-modules}\label{app:proj-mod-base}
Here, we explicitly describe the $\LUresSL2$ action in the projective
module $\PP^{\pm}_{s,r}$. Let $s$ be an integer $1\leq s\leq p-1$ and
$r\in\oN$.  

For $r > 1$, the projective module $\PP^{\pm}_{s,r}$ has the basis
\begin{equation}\label{left-proj-basis-plus}
  \{\toppr_{n,m},\botpr_{n,m}\}_{\substack{0\le n\le s-1\\0\le m\le r-1}}
  \cup\{\leftpr_{k,l}\}_{\substack{0\le k\le p-s-1\\1\le l\le
  r-1}}
\cup\{\rightpr_{k,l}\}_{\substack{0\le k\le p-s-1\\0\le l\le
  r}},
\end{equation}
where $\{\toppr_{n,m}\}_{\substack{0\le n\le s-1\\0\le m\le
    r-1}}$ is the basis
corresponding to the top module in~\eqref{schem-proj},\\
$\{\botpr_{n,m}\}_{\substack{0\le n\le s-1\\0\le m\le r-1}}$
to the bottom, $\{\leftpr_{k,l}\}_{\substack{0\le k\le
    p-s-1\\1\le l\le r-1}}$ to the left, and
$\{\rightpr_k\}_{\substack{0\le k\le p-s-1\\0\le l\le r}}$ to
the right module. 

For $r=1$, the basis does not contain
$\{\leftpr_{k,l}\}_{\substack{0\le k\le p-s-1\\1\le l\le
r-1}}$ terms and we imply $\leftpr_{k,l}~\equiv~0$ in the
action.  The $\LUresSL2$-action on $\PP^{\pm}_{s,r}$ is given by
\begin{align*}
  K\toppr_{n,m}&=\pm\q^{s-1-2n}\toppr_{n,m}, \quad 0\le n\le s-1,\quad 0\le m\le r-1,\\
  K\leftpr_{k,m}&=\mp\q^{p-s-1-2k}\leftpr_{k,m}, \quad 0\le k\le p-s-1,\quad 1\le m\le r-1,\\
  K\rightpr_{k,m}&=\mp\q^{p-s-1-2k}\rightpr_{k,m}, \quad 0\le k\le p-s-1,\quad 0\le m\le r,\\
  K\botpr_{n,m}&=\pm\q^{s-1-2n}\botpr_{n,m}, \quad 0\le n\le s-1,\quad 0\le m\le r-1,\\
  E\toppr_{n,m}&=
  \begin{cases}
    \pm[n][s-n]\toppr_{n-1,m}\pm g\botpr_{n-1,m}, &1\le n\le s-1,\\
    \pm g\frac{r-m}{r}\rightpr_{p-s-1,m}\pm g\frac{m}{r}\leftpr_{p-s-1,m}, & n=0,\\
  \end{cases}
  \quad 0\le m\le r-1,\\
  E\leftpr_{k,m}&=
  \begin{cases}
	\mp[k][p-s-k]\leftpr_{k-1,m}, &1\le k\le p-s-1, \\
    \pm g(m-r)\botpr_{s-1,m-1}, & k=0,\\
  \end{cases}
  \quad 1\le m\le r-1,\\
  E\rightpr_{k,m}&=
  \begin{cases}
    \mp[k][p-s-k]\rightpr_{k-1,m}, &1\le k\le p-s-1,\\
    \pm gm\botpr_{s-1,m-1}, & k=0,\\
  \end{cases}
  \quad 0\le m\le r,\\
  E\botpr_{n,m}&= \pm[n][s-n]\botpr_{n-1,m}, \quad 1\le n\le
  s-1, \quad 0\le m\le r-1 \quad (\botpr_{-1,m}\equiv 0),\\
  F\toppr_{n,m}&=
  \begin{cases}
    \toppr_{n+1,m}, &0\le n\le s-2,\\
    \frac{1}{r}\rightpr_{0,m+1}-\frac{1}{r}\leftpr_{0,m+1}, & n=s-1 
    \quad(\leftpr_{0,r}\equiv0),\\
  \end{cases}
   \quad 0\le m\le r-1,\\
  F\leftpr_{k,m}&=
  \begin{cases}
    \leftpr_{k+1,m}, &0\le k\le p-s-2,\\
    \botpr_{0,m}, & k=p-s-1,\\
  \end{cases}
  \quad 1\le m\le r-1,\\
  F\rightpr_{k,m}&=
  \begin{cases}
    \rightpr_{k+1,m}, & 0\le k\le p-s-2,\\
    \botpr_{0,m}, & k=p-s-1,\\
  \end{cases}
  \quad 0\le m\le r,\\
  F\botpr_{n,m}&= \botpr_{n+1,m}, \quad 1\le n\le s-1,
  \quad 0\le m\le r-1 \quad (\botpr_{s,m}\equiv 0).
\end{align*}
where $g=\frac{(-1)^{p}[s]}{[p-1]!}$.

In thus introduced basis, the $s\ell(2)$-generators $e$, $f$ and $h$
act in $\PP^{\pm}_{s,r}$ as in the direct sum
$\XX^\alpha_{s,r}\oplus\XX^{-\alpha}_{p-s,r-1}\oplus\XX^{-\alpha}_{p-s,r+1}
\oplus \XX^\alpha_{s,r}$
(see~\eqref{basis-lusz-irrep-1}-\eqref{basis-lusz-irrep-3}), where for
$r=1$ we set $\XX^{-\alpha}_{p-s,0}\equiv 0$.

\end{document}